\documentclass[nonacm,sigconf]{acmart}
\setcopyright{none}
\renewcommand\footnotetextcopyrightpermission[1]{}
\makeatletter
\def\@ACM@checkaffil{
    \if@ACM@instpresent\else
    \ClassWarningNoLine{\@classname}{No institution present for an affiliation}%
    \fi
    \if@ACM@citypresent\else
    \ClassWarningNoLine{\@classname}{No city present for an affiliation}%
    \fi
    \if@ACM@countrypresent\else
        \ClassWarningNoLine{\@classname}{No country present for an affiliation}%
    \fi
}
\makeatother
\usepackage{color}
\usepackage{float}
\usepackage{graphicx}
\usepackage{amsfonts}

\usepackage{latexsym,amsmath}
\usepackage{todonotes}
\usepackage[ruled]{algorithm}
\usepackage[noend]{algpseudocode}
\algdef{SE}[DOWHILE]{Do}{doWhile}{\algorithmicdo}[1]{\algorithmicwhile\ #1}
\usepackage{comment}
\usepackage{epstopdf}
\usepackage{soul}
\usepackage{cleveref}
\usepackage{array}

\title{Capturing a Moving Target on Star Graphs by Two Communicating Mobile Agents}
\author{Khaled Jawhar}
\orcid{0009-0007-1074-4838}
\affiliation{
  \institution{Carleton University}
  \city{Ottawa}
  \country{Canada}
}
\email{khaledjawhar@cmail.carleton.ca}

\author{Evangelos Kranakis}
\orcid{0000-0002-8959-4428}
\affiliation{
  \institution{Carleton University}
  \city{Ottawa}
  \country{Canada}
}
\email{kranakis@scs.carleton.ca}
\begin{document}
\begin{abstract}
\noindent
We study a problem of searching for a mobile target in an $m$-ray star graph, a natural generalization of linear search to multiple directions. The target is placed adversarially on one of the rays and may move with constant speed. 
We investigate a two-robot setting, where cooperation and communication play a central role. We study two communication models: the Face-to-Face (F2F) model, where robots communicate only upon meeting, and the Sender-Receiver (S/R) model, where communication is asymmetric. We focus on the {\em away model}, in which the target moves {\em away} from the origin with speed $v<1$.

We design search strategies that minimize the competitive ratio and analyze the problem under three knowledge assumptions: \emph{NoDistance}, \emph{NoSpeed}, and \emph{NoKnowledge}. For each setting, we derive upper bounds on the competitive ratio and for some cases, we derive the lower bound. Our results highlight how the number of rays and the communication model influence the competitive ratio.

{\bf Key words and phrases.} Capturing, Knowledge, Mobile Agents, Moving Target, Search, Sender/Receiver, Star Graph. 
\end{abstract}

\maketitle
\begingroup
\renewcommand{\thefootnote}{}
\footnotetext{Research of Evangelos Kranakis was supported in part by NSERC (Natural Sciences and Engineering Research Council of Canada) grant.}
\endgroup
\section{Introduction}

Search and evacuation problems are fundamental topics in algorithm design, with applications in areas such as robotics, surveillance, data mining, and network exploration. In the classical linear search problem, one or more agents must locate an unknown target on an infinite line, while in evacuation problems multiple agents cooperate so that they all find an unknown exit. These problems have been extensively studied under various assumptions on the number of agents, their speeds, and the amount of information available.

A central measure of performance is the \emph{competitive ratio}, which compares the time required by an online search strategy to find the target without knowledge of its location to the time required by an optimal offline strategy that knows the target’s location in advance. An algorithm is optimal if it minimizes this ratio.

While most existing work focuses on static targets and simple domains such as the line, less attention has been given to more complex environments and dynamic targets. In this paper, we study the problem of searching for a mobile target in an $m$-ray star graph, a natural generalization of the linear domain to multiple directions. The presence of multiple rays introduces a fundamental trade-off between exploring new directions and revisiting previously explored regions, while the motion of the target further increases the difficulty.

We investigate two-robot settings under two communication models: the Face-to-Face (F2F) model, where robots communicate only upon meeting, and the Sender-Receiver (S/R) model, which allows asymmetric wireless communication. We analyze the problem under different knowledge assumptions about the target, namely whether its initial distance, speed, or neither is known.
The three knowledge assumptions are chosen to capture different levels of information available to the searching robots and to investigate how such information affects the difficulty of the capture problem. In particular, they provide a natural progression from settings in which the robots possess useful information about the target to the completely uninformed setting. Studying these variants allows us to quantify the benefit of prior knowledge through the resulting competitive ratios and to identify how the availability of information interacts with the topology of the underlying graph. These assumptions are also motivated by the corresponding settings considered for the infinite line in previous work \cite{coleman2023line,escape2024}, allowing us to examine how the transition from a line to an $m$-ray star graph affects the role of information in online target capture.
Our goal is to design search strategies that minimize the competitive ratio of capture time (also referred to as evacuation time) and to understand how the interplay between domain structure, target dynamics, communication constraints, and available knowledge affects performance. We provide upper bounds on the competitive ratio for each setting and establish lower bounds for other cases.
\subsection{Problem Formulation and Objectives}
The primary objective of this work is to design online search strategies for two mobile agents that minimize the \emph{competitive ratio} when searching for an oblivious moving target on an $m$-ray star graph. Specifically, we aim to quantify the precise trade-offs between the agents' initial knowledge, communication constraints, and the parity of the graph's rays.

To evaluate how partial information influences performance, we consider three knowledge settings:
\begin{itemize}
    \item \textbf{NoDistance (Known $v$, Unknown $d$):} The robots know the target's constant speed $v < 1$, but have no prior information regarding the target's initial distance $d$ from the origin.
    \item \textbf{NoSpeed (Known $d$, Unknown $v$):} The robots know the target's initial distance $d$ from the origin, but do not know its speed $v$.
    \item \textbf{NoKnowledge (Unknown $d$ and $v$):} The robots possess no prior knowledge about either the target's initial distance $d$ or its speed $v$.
\end{itemize}
\subsection{Context and Motivation}

The main motivation of this study is to understand how 
the lack of knowledge of the target’s speed and/or initial distance—affect the competitive ratio of search algorithms. While search problems have been extensively studied in the literature, most existing work focuses on static targets or simplified settings.

The problem of capturing a moving target has received less attention, especially in non-linear domains and under asymmetric communication models. Previous studies have mainly considered moving target capture on the infinite line, often under simpler communication assumptions such as symmetric or F2F interaction.
Although the $m$-ray star graph has been studied as a search domain, the case of a moving target in this setting remains relatively unexplored. In this work, we aim to address this gap.
We also investigate how different communication models affect performance, showing that the S/R model can improve over the F2F model in certain settings. In addition, we highlight how the parity of the number of rays influences the competitive ratio, illustrating how symmetry breaking in the search domain can lead to improved performance under asymmetric communication.
\paragraph{Motivation for the Sender-Receiver (S/R) Model.}
The Sender-Receiver model models asymmetric communication scenarios commonly encountered in heterogeneous robotic systems. For instance, energy or payload constraints may allow only one agent (the \emph{Sender} $S$) to be equipped with a long-range wireless transmitter or optical beacon, while the other agent (the \emph{Receiver} $R$) carries only a passive receiver or sensor. Studying this model allows us to establish the minimal communication infrastructure required to improve performance over pure face-to-face (F2F) interaction, highlighting how asymmetric signalling interacts with the graph topology.
\subsection{Model, Preliminaries, and Notation}

We consider a star graph consisting of $m$ infinite rays (branches), where $m \geq 2$. The robots are initially located at the origin and can move with unit speed. An oblivious mobile target is placed adversarially on one of the rays at an initial distance $d$ from the origin and moves along that ray with constant speed $v$, either toward or away from the origin.
If the target moves away from the origin, we assume $v < 1$, while if it moves toward the origin, we allow $v \geq 0$. When $v = 0$, the target is stationary. Both robots are equipped with pedometers and computational capabilities, allowing them to infer each other's positions based on exchanged information. The robots and the target start moving simultaneously. Throughout the paper, we use the notation
$
u = \frac{1}{1-v}
\mbox{ and }
M = 
\max\{d,u\}.
$

In the two-robot setting, we consider the following communication models.
Face-to-Face (F2F): Robots can communicate only when they are co-located.
Sender-Receiver (S/R): One robot, called the sender ($S$), can transmit information wirelessly but receives information only via face-to-face communication, while the other robot, called the receiver ($R$), can receive information wirelessly but transmits only via face-to-face communication.
The analysis takes into account the following knowledge models:
NoDistance: the speed $v$ is known, but the initial distance $d$ is unknown;
NoSpeed: the initial distance $d$ is known, but the speed $v$ is unknown;
NoKnowledge: neither $d$ nor $v$ is known.
\paragraph{Formal Definition of Competitive Ratio.}
Let $T_{\mathcal{A}}(d, v)$ denote the time taken by an online search algorithm $\mathcal{A}$ to capture a target placed at distance $d$ moving at speed $v < 1$. An optimal offline algorithm with complete knowledge of $d$ and $v$ moves directly along the target's ray and captures it at time $T_{opt}(d, v) = \frac{d}{1-v} = u \cdot d$, where $u = \frac{1}{1-v}$. The \emph{competitive ratio} $\text{CR}(\mathcal{A})$ is formally defined as:
\begin{equation}
    \text{CR}(\mathcal{A}) = \sup_{d > 0, \, 0 \leq v < 1} \frac{T_{\mathcal{A}}(d, v)}{T_{opt}(d, v)} = \sup_{d > 0, \, 0 \leq v < 1} \frac{T_{\mathcal{A}}(d, v)}{u \cdot d}.
\end{equation}
\subsection{Related Work}

Search and evacuation problems have been studied in the literature under a wide range of models and assumptions. Work by Bellman~\cite{bellman1963optimal} introduced stochastic search models, laying the foundation for subsequent research. Numerous variants have been explored, including settings with static or mobile targets, multiple agents with heterogeneous speeds, and communication capabilities~\cite{beck1973return,bose2017general,bose2013revisiting}. 

In the linear domain, evacuation and search problems with multiple robots have received significant attention. For example, evacuation on a line with two robots of distinct speeds was studied in~\cite{bampas2019linear}, where coordinated zigzag strategies achieve a competitive ratio of $9$ for a wide range of speeds. These results highlight the importance of coordination and synchronization between agents in minimizing search time.

Beyond the line, search and evacuation have also been investigated in more complex environments. Multi-agent search on tree graphs was considered in~\cite{devillez2021two} under both wireless and face-to-face communication models, achieving improved competitive ratios through bidirectional exploration strategies. 
The $m$-ray star graph is a natural extension of the linear domain and has been studied in several contexts. By $m$-ray star we mean a star graph consisting of $m$ rays. For $m=1$, this is an infinite half-line with the robot starting at the unique endpoint and the search problem is trivial. For $m=2$, it is an infinite line with the robot starting at the origin in which case the search problem has been analyzed in the papers such as~\cite{escape2024,coleman2023line}. In conclusion, the interesting unknown case is $m \geq 3$.
Multi-target exploration on $m$-ray graphs was investigated in~\cite{angelopoulos2014multi}, where performance is evaluated under different levels of prior knowledge. Wireless evacuation on $m$-ray graphs with multiple robots was studied in~\cite{brandt2020wireless}, establishing both upper and lower bounds through coordinated exploration strategies.

Search for moving targets has been studied in both stochastic and deterministic settings. McCabe~\cite{mccabe1974} considered stochastic motion, while Alpern and Gal~\cite{gal_search_games} developed a deterministic framework for searching oblivious targets moving at constant speed.
Distributed pursuit and search problems on graph topologies have also been studied in the context of network security and intruder containment; see, e.g., Blin et al.~\cite{blin2008distributed} for distributed chasing of network intruders.
Our work is closely related to these lines of research, but differs in several key aspects. We study the problem of capturing a mobile target in an $m$-ray star graph, considering two-robot settings. In the latter, we analyze two communication models—Face-to-Face (F2F) and Sender-Receiver (S/R)—and investigate how communication constraints, target dynamics, and partial knowledge affect the design and performance of search strategies.

\subsection{Results of the paper}

\begin{table}[H]
\caption{Competitive Ratios (CRs) for an $m$-ray graph with two robots (F2F model)}
\label{StarGraphTwoRobotsF2F}
\centering
\scriptsize
\renewcommand{\arraystretch}{1.2}
\resizebox{\columnwidth}{!}{%
\begin{tabular}{|c|c|c|}
\hline
Knowledge & Parity of $m$ & Competitive Ratio \\
\hline
$v$ & Even &
$1+\dfrac{(m+2)^{\frac{m+2}{2}}(1+v)^{\frac{m}{2}}
u^{\frac{m+2}{2}}}{m^{\frac{m}{2}}}$ \\
\hline
$v$ & Odd &
$1+\dfrac{(m+3)^{\frac{m+3}{2}}(1+v)^{\frac{m+1}{2}}
u^{\frac{m+3}{2}}}{(m+1)^{\frac{m+1}{2}}}$ \\
\hline
$d$ & Even &
$1+O\left(u^{2^{\frac{m+2}{2}}}\log^2(u)\right)$ \\
\hline
$d$ & Odd &
$1+O\left(u^{2^{\frac{m+3}{2}}}\log^2(u)\right)$ \\
\hline
None & Even &
$1+O\left(M^{2^{\frac{m+4}{2}}}
\log^2(M)\log\log(M)\right)$ \\
\hline
None & Odd &
$1+O\left(M^{2^{\frac{m+5}{2}}}
\log^2(M)\log\log(M)\right)$ \\
\hline
\end{tabular}%
}
\end{table}
\begin{table}[t]
\caption{Competitive Ratios (CRs) for an $m$-ray graph with two robots (S/R model)}
\label{StarGraphTwoRobotsSR}
\centering
\scriptsize
\renewcommand{\arraystretch}{1.4}
\begin{tabular}{|c|c|p{0.58\columnwidth}|}
\hline
Knowledge & Parity & Competitive Ratio \\
\hline
$v$ & Even &
$\displaystyle
1+\frac{(m+2)^{\frac{m+2}{2}}(1+v)^{\frac{m}{2}}}
{m^{\frac{m}{2}}}
u^{\frac{m+2}{2}}$ \\
\hline
$v$ & Odd &
$\displaystyle
2u+
\frac{(m-1)(1+v^2)}{1+v}u
\left(
\frac{(m+1)(1+v)u}{m-1}
\right)^{\frac{m+1}{2}}$ \\
\hline
$d$ & Even &
$\displaystyle
1+O\left(
u^{2^{\frac{m+2}{2}}}\log^2(u)
\right)$ \\
\hline
$d$ & Odd &
$\displaystyle
O(u)+O\left(
u^{2^{\frac{m+1}{2}}+1}\log^2(u)
\right)$ \\
\hline
None & Even &
$\displaystyle
1+O\left(
M^{2^{\frac{m+4}{2}}}
\log^2(M)\log\log(M)
\right)$ \\
\hline
None & Odd &
$\displaystyle
O(u)+O\left(
M^{2^{\frac{m+3}{2}}}
\frac{u\log\log M}{d}
\log^2(M)
\right)$ \\
\hline
\end{tabular}
\end{table}
\paragraph{Discussion of Results.}
Tables~\ref{StarGraphTwoRobotsF2F} and~\ref{StarGraphTwoRobotsSR} reveal several structural insights:
\begin{enumerate}
    \item \textbf{Impact of Knowledge:} Knowing the speed $v$ yields constant upper bounds on the competitive ratio (dependent on $m$ and $v$), whereas lacking speed knowledge ($\text{NoSpeed}$ and $\text{NoKnowledge}$) introduces logarithmic overhead factors in terms of $u$ or $M = \max\{d, u\}$.
    \item \textbf{F2F vs. S/R Model:} When $m$ is even, the S/R model provides no asymptotic improvement over F2F due to graph symmetry. However, when $m$ is odd, the asymmetric wireless capability enables symmetry breaking, reducing the exponent in $u$ from $\frac{m+3}{2}$ to $\frac{m+1}{2}$.
\end{enumerate}
A conference version of this paper is appearing in the proceedings of ICDCN~\cite{jawhar2026ICDCN}.
\section{Searching for a Moving Target Using the F2F model}
\subsection{Overview of the Search Strategy}
All proposed search algorithms share a common geometric structure: the two robots operate in synchronized rounds. In round $k$, the robots explore two consecutive, previously unexamined rays up to a specific search depth $x_k$ (or $a_k$), before returning to the origin to exchange information or synchronize. If neither robot encounters the target, they advance to the next pair of rays and scale up their search radius according to an exponentially increasing parameter sequence.
\subsection{NoDistance Away Model}
The idea is to have both robots move the same distance in each round. In every round, the two robots explore two consecutive rays. Each robot moves a distance $a^r$ during round $r$, where $a$ is a positive constant, and then returns to the origin. If one of the robots finds the target, it moves along with the target. If neither robot finds the target, both robots return to the origin, proceed to the next pair of consecutive rays, and increment the distance to $a^{r+1}$ for the next round. 

Assume we have two robots, $R_{1}$ and $R_{2}$, initially located at the origin, and a target placed at a distance $d$ from the origin. The algorithm proceeds as follows:

\begin{algorithm}[H]
\caption{NDAwayMrayWithTwoRobotsF2F}
\label{NDAwayMrayWithTwoRobotsF2F}
\begin{algorithmic}[1]
\For{$k \gets 1$ to $\infty$}
    \State $R_{1}$ moves along one ray to distance $a^k$ (unless the target is found), then returns to the origin.
    \State $R_{2}$ moves along the consecutive ray to distance $a^k$ (unless the target is found), then returns to the origin.
    \If{Target is found by $R_{1}$}
        \State $R_{1}$ switches direction and moves to catch $R_{2}$.
        \State Both robots move together to catch up with the target.
        \State \textbf{terminate}
    \ElsIf{Target is found by $R_{2}$}
        \State $R_{2}$ switches direction and moves to catch $R_{1}$.
        \State Both robots move together to catch up with the target.
        \State \textbf{terminate}
    \Else
        \State $R_{1}$ and $R_{2}$ proceed to the next two consecutive rays.
    \EndIf
\EndFor
\end{algorithmic}
\end{algorithm}

We prove the following result.

\begin{theorem}
\label{NDAwayMrayWithTwoRobotsF2FTheorem}
The competitive ratio of Algorithm~\ref{NDAwayMrayWithTwoRobotsF2F} is at most
\[
1+\frac{(m+2)^{\frac{m+2}{2}}(1+v)^{\frac{m}{2}}\cdot u^{\frac{m+2}{2}}}{m^{\frac{m}{2}}}, \quad \text{if $m$ is even},
\]
and
\[
1+\frac{(m+3)^{\frac{m+3}{2}} (1+v)^{\frac{m+1}{2}}\cdot u^{{\frac{m+3}{2}}}}{(m+1)^{\frac{m+1}{2}}}, \quad \text{if $m$ is odd}.
\]
\end{theorem}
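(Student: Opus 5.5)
The plan is to identify the worst-case placement of the target, bound the capture time for that placement by a geometric sum, and then optimize the expansion base $a$. Rays are visited in pairs, so a given ray is revisited only after a fixed number $p$ of rounds. For $m$ even, the $m/2$ pairs tile the rays, and a ray explored in round $k$ is next explored in round $k+p$ with $p=m/2$. For $m$ odd, the pairing shifts by one ray every $\lceil m/2\rceil$ rounds, and the worst revisit gap is $p=(m+1)/2$. Since $\frac{m+2}{2}=p+1$ in the even case and $\frac{m+3}{2}=p+1$ in the odd case, both claimed bounds should be the single expression
\[
1+\frac{(2p+2)^{p+1}(1+v)^{p}u^{p+1}}{(2p)^{p}},
\]
and I would handle both parities at once. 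The first step is to state the revisit lemma rigorously and check that, in the odd case, no ray waits longer than $(m+1)/2$ rounds.

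Next I would set up the timing. Round $j$ lasts $2a^j$, so round $n$ starts at $T_n=\sum_{j<n}2a^j\le 2a^n/(a-1)$. The worst case is a target on a ray explored in round $k$ that is just missed, meaning its position when the robot reaches depth $a^k$ is slightly beyond $a^k$. Then $d+v(T_k+a^k)\ge a^k$, so $d\gtrsim a^k(1-v)-vT_k$; because $T_k\le 2a^k/(a-1)$ this gives $d=\Omega(a^k)$, and the adversary tunes $d$ to the boundary. The target is first reached in round $k+p$, at depth $x=u(d+vT_{k+p})$, which must be at most $a^{k+p}$; this is what forces $a$ to be large compared with $u$. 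I would also check the smaller cases: a target found in its first-visit round gives a ratio bounded by a lower-order term, so the supremum comes from the missed case. For small $d$, before the first round, the ratio is bounded separately.

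Then I would account for the F2F phase after detection. The finder reverses, meets its partner, which has been moving symmetrically on its own ray and is reached after about $2x$ further time, and the two chase the target. The chase costs a multiplicative factor of order $u(1+v)$ on the elapsed time, because the target keeps receding at speed $v$ during both the meeting detour and the catch-up. Combining, the total time is at most roughly $T_{k+p}+O(u(1+v))\cdot a^{k+p}$, and dividing by $ud\approx u\,a^k(1-v)$ leaves a ratio of the form $1+C(u,v)\,a^{p+1}/(a-1)$. The hard part will be making this precise with the correct constants. Specifically, I need to bound the capture time so that the dependence on $a$ comes out exactly as $a^{p+1}/(a-1)$, and I must track how $(1+v)$ and $u$ enter. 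My guess is that one should write $a=\lambda(1+v)u$ and obtain a ratio of the form $1+(1+v)^p u^{p+1}\lambda^{p+1}/(\lambda-1)$ after absorbing the sum.

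Finally I would optimize over the free parameter. The function $\lambda^{p+1}/(\lambda-1)$ is minimized at $\lambda=(p+1)/p$, with value $(p+1)^{p+1}/p^p$. Substituting $p=m/2$ gives $(m+2)^{(m+2)/2}/m^{m/2}$ up to a power of $2$, and the factor $2$ from the round length $2a^j$ must supply exactly that power; $p=(m+1)/2$ gives the odd-case expression $(m+3)^{(m+3)/2}/(m+1)^{(m+1)/2}$ in the same way. I expect the delicate part of this final step to be confirming that the optimizing $a$ satisfies the feasibility constraint $a>1$ and also the capture constraint $x\le a^{k+p}$ in the revisit round.
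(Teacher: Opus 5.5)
Your route is the paper's: lower-bound $d$ from a miss in round $k$, note that capture happens at most $p$ rounds later ($p=m/2$ or $(m+1)/2$), write the ratio as a function of $a$, and optimize. Your revisit-gap analysis is right, and so is your final optimization ($a=\lambda(1+v)u$, $\lambda=(p+1)/p$, the factor $2$ fixing the power of $2$), and it matches the paper.

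The gap is the step you defer as ``the hard part.'' The ratio you optimize is guessed rather than derived, and the heuristics you give would not produce it. Dividing by $ud\approx u\,a^k(1-v)$ drops the $-vT_k$ term of the miss condition. The adversary can take $d$ down to $a^k(1-v)-vT_k$, so this underestimates the worst case. It leads to your stated intermediate form $C\,a^{p+1}/(a-1)$, namely $2a^{p+1}/((1-v)(a-1))$. Its minimizer $a=(p+1)/p$ does not depend on $v$ and can violate $a>(1+v)u$ (e.g.\ $p=1$, $v>1/3$), which contradicts your own $\lambda$-form. Likewise, charging the F2F phase as a factor $u(1+v)$ on $a^{k+p}$ cannot give the exact constant in the statement.

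Two exact computations, which the paper uses, close this.

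First, let $n=k+p$ be the capture round. After detection in round $n$, the partner can only be intercepted on its return leg. So the pair is effectively regrouped at the origin at time $T_{n+1}=T_n+2a^n$, and the capture time is $T_{n+1}+u(d+vT_{n+1})$. Since $1+uv=u$,
\[
T_{n+1}+u\,(d+vT_{n+1})=u\,(d+T_{n+1}),\qquad\text{so}\qquad \frac{\text{capture time}}{ud}=1+\frac{T_{n+1}}{d}.
\]

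Second, keep the miss bound exact: $d\ge a^k(1-v)-vT_k\ge a^k\,\frac{(1-v)a-(1+v)}{a-1}$. Combined with $T_{n+1}\le 2a^{k+p+1}/(a-1)$, the factors $a-1$ cancel and
\[
\mathrm{CR}\le 1+\frac{2a^{p+1}}{(1-v)a-(1+v)}.
\]
For $a=\lambda(1+v)u$ one has $(1-v)a-(1+v)=(1+v)(\lambda-1)$, and this is exactly what yields your $\lambda$-form.

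Your feasibility worry then disappears. The condition $\lambda>1$ is precisely $a>(1+v)u$, which makes the reach margin $a^k-u(d+vT_k)$ grow by $a^k\bigl(a-(1+v)u\bigr)>0$ per round, so the target is eventually caught. Taking $k$ to be the last miss, the next visit is the capture round, and no separate check of $x\le a^{k+p}$ is needed.

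Finally, your remark that small $d$ can be ``bounded separately'' needs a normalization such as $d\ge 1$; the paper assumes this tacitly too. Under this protocol the partner's absence of length about $2a^n$ does not shrink with $d$, so $1+T_{n+1}/d\to\infty$ as $d\to 0$. With $d\ge 1$, first-visit captures give at most about $1+2a^{p+1}/(a-1)$. That is dominated by the bound above because $(1-v)a-(1+v)<a-1$.
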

\begin{proof} (Theorem~\ref{NDAwayMrayWithTwoRobotsF2FTheorem})
The proof consists of two cases.
\noindent
\textbf{Case 1: $m$ is even.}
Suppose the target is missed in round $r-1$ on a ray. To miss the target in round $r-1$, the target must have reached a distance greater than the robot's reach $a_{r-1}$. Since the target moves at speed $v$, its distance at this time is $d +2v \sum_{i=0}^{r-2} a_i$. Then, for $R_1$ to miss the target in round $r-1$, we have:
\[
u(d + 2v\sum_{i=0}^{r-2} a^i) \geq a^{r-1} \implies
d \geq \frac{a^r(1-v) - a^{r-1}(1+v) + 2v}{a-1}.
\]
If the target is missed by one of the robots in round \(r-1\), it will be captured in round \((r-1)+\frac{m}{2}\). Suppose that robot \(R_1\) captures the target. The time required for \(R_1\) to catch up with \(R_2\) is \(a^{(r-1)+\frac{m}{2}}\). Therefore, the total time for both robots to capture the target is given by:
$
u\cdot (x_{(r-1)+\frac{m}{2}} + vx_{(r-1)+\frac{m}{2}}).
$
Thus, the competitive ratio satisfies:
\begin{align*}
CR &\leq 1 + \frac{2\sum_{i=0}^{(r-2)+\frac{m}{2}} a^i + 2x_{(r-1)+\frac{m}{2}}}{d}= 1 + \frac{2a^{r+\frac{m}{2}} - 2}{(a - 1)d}.
\end{align*}

Substituting the bound on $d$ gives:
\[
CR \leq 1 + \frac{2a^{r+\frac{m}{2}}}{a^{r-1}(a(1-v) - (1+v))} = 1 + \frac{2a^{\frac{m+2}{2}}}{(1-v)a - (1+v)}.
\]
This expression is minimized when:
$
a = u\cdot \frac{(m+2)(1+v)}{m}.
$
Substituting this into the competitive ratio gives:
\begin{align*}
CR &\leq 1 + \frac{\frac{2(m+2)^{\frac{m+2}{2}}(1+v)^{\frac{m+2}{2}}}{m^{\frac{m+2}{2}}(1-v)^{\frac{m+2}{2}}}}{\frac{(m+2)(1+v)}{m} - (1+v)}=1+\frac{(m+2)^{\frac{m+2}{2}}(1+v)^{\frac{m}{2}}\cdot u^{\frac{m+2}{2}}}{m^{\frac{m}{2}}}.
\end{align*}

To confirm the result, let $m = 2$ and substitute into the formula. We obtain:
$
CR = 1 + u^2\cdot 8(1+v),
$
which agrees with the result for the single-robot case reported in \cite{coleman2023line}.

\noindent
\textbf{Case~2: $m$ is odd.} 
Suppose the target is missed at round $r-1$ by robot $R_1$. Then, for $R_1$ to capture the target, the required time is:
\[
2\sum_{i=0}^{(r-2)+\frac{m+1}{2}} a^i + u\left(d + 2v\sum_{i=0}^{(r-2)+\frac{m+1}{2}} a^i\right).
\]

Since $R_1$ misses the target in round $r-1$, we have:
\[
u\left(d + 2v\sum_{i=0}^{r-2} a^i\right) \geq a^{r-1} \implies
d \geq \frac{a^r(1-v) - a^{r-1}(1+v) + 2v}{a-1}.
\]

The time needed for $R_1$ to catch $R_2$ is $a^{(r-1)+\frac{m+1}{2}}$, and the time for both robots to catch the target is:
$
u\left(x_{(r-1)+\frac{m+1}{2}} + vx_{(r-1)+\frac{m+1}{2}}\right).
$
Thus, the competitive ratio satisfies:
\begin{align*}
CR &\leq 1 + \frac{2\sum_{i=0}^{(r-2)+\frac{m+1}{2}} a^i + 2x_{(r-1)+\frac{m+1}{2}}}{d}= 1 + \frac{2a^{r+\frac{m+1}{2}}- 2}{(a - 1)d}.
\end{align*}

Substituting the bound on $d$ gives:
\[
CR \leq 1 + \frac{2a^{r+\frac{m+1}{2}}}{a^{r-1}(a(1-v) - (1+v))} = 1 + \frac{2a^{\frac{m+3}{2}}}{(1-v)a - (1+v)}.
\]

This expression is minimized when:
$
a = \frac{(m+3)(1+v)}{(m+1)(1-v)}.
$
Substituting this into the competitive ratio gives:
\begin{align*}
CR \leq 1+\frac{(m+3)^{\frac{m+3}{2}} (1+v)^{\frac{m+1}{2}}\cdot u^{{\frac{m+3}{2}}}}{(m+1)^{\frac{m+1}{2}}}.
\end{align*}
This proves Theorem~\ref{NDAwayMrayWithTwoRobotsF2FTheorem}.
\qed
\end{proof}

\subsection{NoSpeed Away Model}
The idea is to have both robots move in a v-shape to cover the same distance in each round. Let us consider a monotone increasing sequence of non-negative integers $\{f_{i}:i\geq 0\}$. Each robot tries to guess the speed of the target. We will use the guess $v_{i}=1-2^{-f_{i}}$. In every round, the two robots explore two consecutive rays and try to find the target by guessing its speed. Each robot moves a distance determined by the speed guess $v_i$, ensuring that if the target’s speed is less than or equal to $v_i$, the target will be captured during that phase. If the target is not caught during iteration $i$, then both robots move to two new consecutive rays and increase the guess to $v_{i+1}$.The robots repeat this process until one of them captures the target. The other robot will detect that the target has been captured when it does not find its partner at the origin, and it will then proceed toward the target. The algorithm is as follows:

\begin{algorithm}[H]
\caption{NSAwayMrayWithTwoRobotsF2F}
\label{NSAwayMrayWithTwoRobotsF2F}
Input: Target Initial distance d
 and Integer increasing sequence $\{f_{i}:i\geq 0\}$ with $f_{0}=1$, $t=0$ and $i\geq 0$
\begin{algorithmic}[1]
\For{$k \gets 1$ to $\infty$}
    \State $v_{i}=1-2^{-f_{i}}$.
    \State $R_{1}$ moves along one ray to distance $x_{i}=(-1)^{i} \frac{d+tv_{i}}{1-v_{i}}$ (unless the target is found), then returns to the origin.
    \State $R_{2}$ moves along the consecutive ray to distance $x_{i}=(-1)^{i} \frac{d+tv_{i}}{1-v_{i}}$ (unless the target is found), then returns to the origin.
    \If{Target is found by $R_{1}$}
        \State $R_{1}$ proceeds with the target.
        \State \textbf{terminate}
    \ElsIf{Target is found by $R_{2}$}
        \State $R_{2}$ proceeds with the target.
        \State \textbf{terminate}
    \Else
        \State $R_{1}$ and $R_{2}$ proceed to the next two consecutive rays.
    \EndIf
    \If{One robot does not meet the other at the origin}
         \State It proceeds along the ray currently being explored by the other robot until it reaches the target.
        \State \textbf{terminate}
    \EndIf
    \State $t=t+2x_{i}$
\EndFor
\end{algorithmic}
\end{algorithm}

We prove the following result.
\begin{theorem}
\label{NSAwayMrayWithTwoRobotsF2FTheorem}
The competitive ratio of Algorithm~\ref{NSAwayMrayWithTwoRobotsF2F} is at most
\[
1 + O\left( u^{2^{\frac{m+3}{2}}} \log^2(u) \right), \quad \text{if $m$ is odd},
\]
and
\[
1 + O\left( u^{2^{\frac{m+2}{2}}} \log^2(u) \right), \quad \text{if $m$ is even}.
\]
\end{theorem}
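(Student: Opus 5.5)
Since $d$ is known, I will take the speed guesses $v_i=1-2^{-f_i}$ with $f_i=2^i$, so that $u_i:=1/(1-v_i)=2^{2^i}$. The first step is a \emph{capture lemma}. Let $t_i$ be the total time spent before round $i$ starts; a robot sweeping a fresh ray in round $i$ reaches distance $x_i=u_i(d+t_iv_i)$ at time $t_i+x_i$. If the target is on that ray and $v\le v_i$, its position at that moment is $d+v(t_i+x_i)\le x_i$, so it is met. Consequently, once the round index $i^*$ satisfies $u_{i^*}\ge u$, the target is certain to be captured as soon as its ray is visited with index at least $i^*$.

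The second step is to control the round duration by the recursion $t_{i+1}=t_i+2x_i\le t_i(1+2u_i)+2u_id$. Unrolling it gives
\[
t_i\le 2d\sum_{j<i}u_j\prod_{j<l<i}(1+2u_l)=O\!\left(d\,3^{i}\prod_{j<i}u_j\right).
\]
With $u_j=2^{2^j}$ the product is $2^{2^i-1}\approx u_i$, so $t_i=O(d\,u_i\,3^i)$. This doubly exponential growth is where the tower exponents in the statement come from.

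The third step counts rounds. Each round covers two rays, so in the worst case the target's ray is first visited (or first visited in a round with index at least $i^*$) after $m/2$ further rounds when $m$ is even, and $(m+1)/2$ further rounds when $m$ is odd. In the odd case the pairing shifts rays each cycle, which costs one extra round. Thus the capturing round is $i_c\le i^*+\lceil m/2\rceil$, where $2^{2^{i^*}}\approx u$, that is, $i^*\approx\log\log u$. The capture time is then at most $t_{i_c}+x_{i_c}=O(d\,u_{i_c}3^{i_c})$, with
\[
u_{i_c}=2^{2^{i^*+\lceil m/2\rceil}}\le (2u)^{2^{\lceil m/2\rceil+1}},
\]
allowing one doubling of slack for the overshoot of $i^*$. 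The factor $3^{i_c}$ contributes $O(\log^{\log_2 3}u)\subseteq O(\log^2 u)$. Dividing by $T_{opt}=ud$ gives $1+O(u^{2^{(m+2)/2}}\log^2 u)$ for even $m$ and $1+O(u^{2^{(m+3)/2}}\log^2 u)$ for odd $m$. The leading $1$ comes from the robot that moves with the target after capture.

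The main obstacle will be matching the exponent exactly, $2^{(m+2)/2}$ rather than $2^{m/2}$ or $2^{(m+4)/2}$. That requires a careful worst case: the target lies on a ray just visited in round $i^*-1$ with guess $v_{i^*-1}<v$, so capture waits an extra cycle, and we need the rounding $u\le u_{i^*}\le u^2$. I would first try to bound $u_{i_c}\le u_{i^*}^{2^{\lceil m/2\rceil}}\le u^{2^{\lceil m/2\rceil+1}}$ and check that this gives the stated exponents. I would also verify that the partner's detection time (not finding the other robot at the origin) adds at most $O(t_{i_c})$, so it is absorbed into the bound.
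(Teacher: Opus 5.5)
Your plan follows the paper's. You use doubly exponential guesses $f_i=2^i$ and a recurrence for the elapsed time; you unroll $t_{i+1}\le(1+2u_i)t_i+2u_id$ directly, which is slightly cleaner than the paper's $x_{i+1}\le 4\cdot 2^{f_{i+1}}x_i$ combined with $\sum_{j<i}x_j=(x_i-2^{f_i}d)/(2^{f_i+1}v_i)$. You then use a revisit gap of $\lceil m/2\rceil$ rounds and the miss condition $2^{2^{r-1}}<u$. However, your final step contains a genuine error. The stated exponents come out only because an off-by-one in your round count cancels it.

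The claim $t_{i_c}+x_{i_c}=O(d\,u_{i_c}3^{i_c})$ is false. We have $x_{i_c}=u_{i_c}(d+v_{i_c}t_{i_c})\ge u_{i_c}t_{i_c}/2$, and since $t_i\ge d\,u_i/2$ this is at least $d\,u_{i_c}^2/4$, a full tower level above $d\,u_{i_c}3^{i_c}$.

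For the same reason, your planned check that the partner adds only $O(t_{i_c})$ would fail. The partner finishes its own sweep and is back at the origin only at time $t_{i_c+1}=t_{i_c}+2x_{i_c}\ge(1+u_{i_c})t_{i_c}$. It then chases the target and arrives at $t_{i_c+1}+u(d+vt_{i_c+1})=u(d+t_{i_c+1})$. This is exactly what the paper bounds: $\mathrm{CR}=1+2\sum_{i=0}^{c}x_i/d=1+t_{c+1}/d$, with $c$ the capture round. The leading $1$ is the $ud$ term, since $t+uvt=ut$. The capturing robot itself arrives at $u(d+t_{i_c})$, not at $t_{i_c}+x_{i_c}$.

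Because $u_{i+1}=u_i^2$, inserting your bound $i_c\le i^*+\lceil m/2\rceil$ into this correct accounting gives only $u^{2^{\lceil m/2\rceil+2}}$, one tower level too many. The repair is to count rounds tightly. Any $\lceil m/2\rceil$ consecutive rounds visit every ray; for odd $m$ they cover $m+1$ consecutive ray slots. Hence $i_c\le i^*+\lceil m/2\rceil-1$. Equivalently, as in the paper, the last miss is at a round $r-1$ with $2^{2^{r-1}}<u$, and capture is at $c=r-1+\lceil m/2\rceil$.

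Then $\mathrm{CR}\le 1+t_{i^*+\lceil m/2\rceil}/d=1+O\bigl(u_{i^*+\lceil m/2\rceil}3^{i^*+\lceil m/2\rceil}\bigr)$. Your estimate $u_{i^*+\lceil m/2\rceil}\le(2u)^{2^{\lceil m/2\rceil+1}}$ then gives exactly the exponents $2^{(m+2)/2}$ for even $m$ and $2^{(m+3)/2}$ for odd $m$.
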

\begin{proof} (Theorem~\ref{NSAwayMrayWithTwoRobotsF2FTheorem})
We have 2 cases to consider. Either the number of rays $m$ is even or odd. 

\noindent    
     Let $d_i$ be the distance from the origin to where the target would be if its speed was equal to $v_i$. Recall the algorithm involves the robot moving a distance $x_i$ (in time $x_i$, since the robot's speed is $1$) away from the origin and back in round $i$.  Observe then that $d_0 = d$, $v_0 = 1/2$, and
\begin{align}
    d_i = d+ 2v_i \sum_{j=0}^{i-1}  x_j.
\end{align}
Therefore, it follows from the definition of $x_i$ that
  \begin{align}
    x_i = 2^{f_i} \left( d+ 2v_i \sum_{j=0}^{i-1}  x_j \right) . 
  \end{align}
  As a consequence
  \begin{align}
    \sum_{j=0}^{i-1}  x_j = \frac{x_i - 2^{f_i} d}{2^{f_i} \cdot 2 \cdot v_i} \label{eq:no_speed_upper_guess_3}
  \end{align}
  Similarly, if we replace $i$ with $i+1$ we have that
  \begin{align}
    \sum_{j=0}^{i} x_j = \frac{x_{i+1} - 2^{f_{i+1}} d}{2^{f_{i+1}} \cdot 2 \cdot v_{i+1}} \label{eq:no_speed_upper_guess_4}
  \end{align}
  Subtracting Equation~\eqref{eq:no_speed_upper_guess_3} from  Equation~\eqref{eq:no_speed_upper_guess_4}, we derive the recurrence
  \begin{align}
    x_i = \frac{x_{i+1} - 2^{f_{i+1}} d}{2^{f_{i+1}} \cdot 2 \cdot v_{i+1}} - \frac{x_i - 2^{f_i} d}{2^{f_i} \cdot 2 \cdot v_i} \label{eq:no_speed_upper_guess_5} 
  \end{align}
  Collecting similar terms and simplifying Equation~\eqref{eq:no_speed_upper_guess_5} yields
  \begin{align}
    \frac{x_{i+1}}{2^{f_{i+1}} \cdot 2 \cdot v_{i+1}}
      &= \notag \left( 1 + \frac 1{2^{f_i} \cdot 2 \cdot v_i}\right) x_i
      + \left( \frac{2^{f_{i+1}}}{2^{f_{i+1}} \cdot 2 \cdot v_{i+1}} - 
      \frac{2^{f_i}}{2^{f_i} \cdot 2 \cdot v_{i}} \right) d \\
    &= \label{eq:no_speed_upper_guess_5_inter} x_i \left( 1+\frac{1}{2^f_i \cdot 2 \cdot v_i} \right) + \frac{d}{2 v_{i+1}} - \frac{d}{2 v_{i}}  \\
    &\leq \label{eq:no_speed_upper_guess_5_new} \left( 1 + \frac 1{2^{f_i} \cdot 2 \cdot v_i}\right) x_i  
  \end{align}
  following from the fact the sum of the last two terms in Inequality~\eqref{eq:no_speed_upper_guess_5_inter} is less than or equal to $0$.

 If we simplify the right-hand side of Equation~\eqref{eq:no_speed_upper_guess_5_new}, we derive the following recursive inequalities
  \begin{align}
    x_{i+1} &\leq 2^{f_{i+1}} \cdot 2 \cdot v_{i+1} \left( 1+ \frac 1{2^{f_i} \cdot 2 \cdot v_i}\right) x_i \notag \\
    &\leq 2^{f_{i+1}} \cdot 2 \cdot \left( 1+ \frac 1{2^{f_i}}\right) x_i \notag \leq \left( 2 \cdot 2^{f_{i+1}} + 2\cdot 2^{f_{i+1} - f_i}\right) x_i \notag
    \\
    &\leq 2^{f_{i+1} } \cdot 4 \cdot x_i , \label{eq:no_speed_upper_guess_6} 
  \end{align}
  which follows since $\frac{1}{2} \leq v_i < 1$ and $f_i < f_{i+1}$ for all $i$.

 By repeated application of the last Recurrence~\eqref{eq:no_speed_upper_guess_6} above and using the fact that by definition $x_0 = 2^{f_0}d$, it follows easily by induction that
  \begin{align}
    x_{i+1} 
    &\leq \notag 2^{f_{i+1} } \cdot 4 \cdot x_i \\
    &\leq \notag 2^{f_{i+1}+f_i} \cdot 4^2 \cdot x_{i-1}  \\
    &~ \vdots \notag \\
    &\leq \notag 2^{f_{i+1}+f_i+f_{i-1}+ \cdots + f_1} \cdot 4^{i+1} \cdot x_0  \\
    &\leq \label{eq:no_speed_upper_guess_7} 2^{\sum_{j=0}^{i+1} f_j} \cdot 4^{i+1} \cdot d 
  \end{align}

  using \ref{eq:no_speed_upper_guess_3} and \ref{eq:no_speed_upper_guess_7}.
  Assume that the target is missed at some round $r-1$, the competitive ratio becomes as follows:
  \begin{align*}
      CR&=\frac{2\sum_{i=0}^{(r-1)+\frac{m+1}{2}} x_{i} +u(d+2v \sum_{i=0}^{(r-1)+\frac{m+1}{2}} x_{i})}{du}\\
      &\leq 1+\frac{2 \sum_{i=0}^{(r-1)+\frac{m+1}{2}} x_{i}}{d}\leq 1+\frac{2}{d}\left(\frac{x_{r+\frac{m+1}{2}}-2^{f_{r+\frac{m+1}{2}}}d}{2\cdot 2^{f_{r+\frac{m+1}{2}}}v_{r+\frac{m+1}{2}}}\right)\\
      &\leq 1+\frac{2}{d}\left(\frac{x_{r+\frac{m+1}{2}}}{2\cdot 2^{f_{r+\frac{m+1}{2}}}v_{r+\frac{m+1}{2}}}\right)\leq 1+\frac{2^{ \sum_{i=0}^{r+\frac{m+1}{2}} f_{i}}\cdot 4^{r+\frac{m+1}{2}}}{v_{r+\frac{m+1}{2}}2^{f_{r+\frac{m+1}{2}}}}\\
      &\leq 1+2^{ 1+\sum_{i=0}^{r+\frac{m-1}{2}} f_{i}}\cdot 4^{r+\frac{m+1}{2}}\\
  \end{align*}
We consider $f_r = 2^r$. Since the target is missed at iteration $r-1$, we have
\[
1 - 2^{-f_{r-1}} < v, \implies
2^{f_{r-1}} < u.
\]
Substituting $f_{r-1} = 2^{r-1}$, we obtain
$
2^{2^{r-1}} < u.
$
Thus, the competitive ratio becomes as follows:
  \begin{align*}
      CR&\leq 1+2^{1+ \sum_{i=0}^{r+\frac{m-1}{2}} 2^{i}}\cdot 4^{r+\frac{m+1}{2}}\leq 1+2^{2^{r+\frac{m+1}{2}}}\cdot 4^{r+\frac{m+1}{2}}\\
      &\leq 1+u^{2^{\frac{m+3}{2}}}\cdot 4^{r+\frac{m+1}{2}}\leq 1+u^{2^{\frac{m+3}{2}}}\cdot 4^{\frac{m+3}{2}}\log^2(u)
  \end{align*}

\noindent  
{\bf Case~2: $m$ is even.}
  Using equation~(\ref{eq:no_speed_upper_guess_3}):
  \begin{align}
    \sum_{j=0}^{i-1}  x_j = \frac{x_i - 2^{f_i} d}{2^{f_i} \cdot 2 \cdot v_i} 
  \end{align}
  and equation~(\ref{eq:no_speed_upper_guess_7}):
  \begin{align}
   x_{i+1}\leq 2^{\sum_{j=0}^{i+1} f_j} \cdot 4^{i+1} \cdot d 
  \end{align}
  Assume that the target is missed at some round $r-1$, the competitive ratio becomes as follows:
  \begin{align*}
      CR&=\frac{\sum_{i=0}^{(r-1)+\frac{m}{2}} 2x_{i} +u\left(d+2v \sum_{i=0}^{(r-1)+\frac{m}{2}} x_{i}\right)}{ud}\leq 1+\frac{2 \sum_{i=0}^{(r-1)+\frac{m}{2}} x_{i}}{d}\\
      &\leq 1+\frac{2}{d}\left(\frac{x_{r+\frac{m}{2}}-2^{f_{r+\frac{m}{2}}}d}{2\cdot 2^{f_{r+\frac{m}{2}}}v_{r+\frac{m}{2}}}\right)\leq 1+\frac{2}{d}\left(\frac{x_{r+\frac{m}{2}}}{2\cdot 2^{f_{r+\frac{m}{2}}}v_{r+\frac{m}{2}}}\right)\\
      &\leq 1+\frac{2^{ \sum_{i=0}^{(r-1)+\frac{m}{2}} f_{i}}\cdot 4^{r+\frac{m}{2}}}{v_{r+\frac{m}{2}}}\leq 1+2^{ 1+\sum_{i=0}^{r+\frac{m-2}{2}} f_{i}}\cdot 4^{r+\frac{m}{2}}\\
  \end{align*}
  We consider $f_r = 2^r$. Since the target is missed at iteration $r-1$, we have
\[
1 - 2^{-f_{r-1}} < v, \implies
2^{f_{r-1}} < u.
\]
Substituting $f_{r-1} = 2^{r-1}$, we obtain
$
2^{2^{r-1}} < u.
$
Thus, the competitive ratio becomes as follows:
  \begin{align*}
      CR&\leq 1+2^{ 1+\sum_{i=0}^{r+\frac{m-2}{2}} 2^{i}}\cdot 4^{r+\frac{m}{2}}\leq 1+2^{2^{r+\frac{m}{2}}}\cdot 4^{r+\frac{m}{2}}\\
      &\leq 1+u^{2^{\frac{m+2}{2}}}\cdot 4^{r+\frac{m}{2}}\\
      &\leq 1+u^{2^{\frac{m+2}{2}}}\cdot 4^{\frac{m+2}{2}}\log^2(u)
  \end{align*}
This proves Theorem~\ref{NSAwayMrayWithTwoRobotsF2FTheorem}.
\qed  
\end{proof}
\subsection{NoKnowledge Away Model}
The idea is to have both robots move in a v-shape to cover the same distance in each round. Let us consider two monotone increasing sequences of non-negative integers $\{f_{i}:i\geq 0\}$ and $\{g_{i}:i\geq 0\}$. Each robot tries to guess the speed and the initial distance of the target. We will use the guess $v_{i}=1-2^{-f_{i}}$ for the speed and $d_{i}=2^{g_{i}}$ for the distance. In every round, the two robots explore two consecutive rays and try to find the target by guessing its speed and distance. Each robot moves a distance determined by the speed guess $v_i$ and the distance guess $d_{i}$, ensuring that if the target’s speed is less than or equal to $v_i$ and the target's distance is less than or equal to $d_{i}$, the target will be captured during that phase. If the target is not caught during iteration $i$, then both robots move to two new consecutive rays and increase the guess to $v_{i+1}$ and $d_{i+1}$. The robots repeat this process until one of them captures the target. The other robot will detect that the target has been captured when it does not find its partner at the origin, and it will then proceed toward the target. The algorithm is as follows:

\begin{algorithm}[H]
\caption{NKAwayMrayWithTwoRobotsF2F}
\label{NKAwayMrayWithTwoRobotsF2F}
Input: Integer increasing sequences $\{f_{i}:i\geq 0\}$ and $\{g_{i}:i\geq 0\}$ with $f_{0}=1$, $g_{0}=0$, $t=0$ and $i\geq 0$
\begin{algorithmic}[1]
\For{$k \gets 1$ to $\infty$}
    \State set $v_{i}=1-2^{-f_{i}}$ and $d_{i}=2^{g_{i}}$
    \State $R_{1}$ moves along one ray to distance $x_{i}=(-1)^{i} \frac{d_{i}+tv_{i}}{1-v_{i}}$ (unless the target is found), then returns to the origin.
    \State $R_{2}$ moves along the consecutive ray to distance $x_{i}=(-1)^{i} \frac{d_{i}+tv_{i}}{1-v_{i}}$ (unless the target is found), then returns to the origin.
    \If{Target is found by $R_{1}$}
        \State $R_{1}$ proceeds with the target.
        \State \textbf{terminate}
    \ElsIf{Target is found by $R_{2}$}
        \State $R_{2}$ proceeds with the target.
        \State \textbf{terminate}
    \Else
        \State $R_{1}$ and $R_{2}$ proceed to the next two consecutive rays.
    \EndIf
    \If{One robot does not meet the other at the origin}
         \State It proceeds along the ray currently being explored by the other robot until it reaches the target.
        \State \textbf{terminate}
    \EndIf
    \State $t=t+2x_{i}$
\EndFor
\end{algorithmic}
\end{algorithm}

We prove the following result.

\begin{theorem}
\label{NKAwayMrayWithTwoRobotsF2FTheorem}
The competitive ratio of Algorithm~\ref{NKAwayMrayWithTwoRobotsF2F} is at most
\[
1 + O\left( M^{2^{\frac{m+4}{2}}} \log^2(M) \log \log(M) \right), \quad \text{if $m$ is even},
\]
and
\[
1 + O\left( M^{2^{\frac{m+5}{2}}} \log^2(M) \log \log(M) \right), \quad \text{if $m$ is odd}.
\]
\end{theorem}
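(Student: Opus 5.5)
We mirror the proof of Theorem~\ref{NSAwayMrayWithTwoRobotsF2FTheorem}, replacing the known distance $d$ by the guesses $d_i=2^{g_i}$. Let $x_i$ be the search depth of round $i$. By construction
\[
x_i = 2^{f_i}\Bigl(2^{g_i} + 2v_i\sum_{j<i} x_j\Bigr),
\]
so $\sum_{j<i}x_j=(x_i-2^{f_i+g_i})/(2^{f_i+1}v_i)$. Taking the difference between consecutive indices gives a recurrence as in \eqref{eq:no_speed_upper_guess_5_inter}. The $d$-terms become $\frac{2^{g_{i+1}}}{2v_{i+1}}-\frac{2^{g_i}}{2v_i}$, and these are no longer nonpositive. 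We bound them instead by $2^{g_{i+1}}$, which is at most $2^{g_{i+1}-g_i}x_i$ since $x_i\ge 2^{g_i}$. This yields
\[
x_{i+1}\le c\,2^{f_{i+1}+g_{i+1}}x_i
\]
for an absolute constant $c$ (say $c=8$). Unrolling as in \eqref{eq:no_speed_upper_guess_7} gives
\[
x_{i}\le 2^{\sum_{j\le i}(f_j+g_j)}\,c^{i}.
\]

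Next we fix the round of capture. Let $r-1$ be the last round in which the target is missed on its ray. The guesses must then have failed there, so either $2^{f_{r-1}}<u$ or $2^{g_{r-1}}<d$. Take $f_i=g_i=2^i$; this gives $2^{2^{r-1}}<\max\{u,d\}=M$, i.e.\ $2^{r}\le 2\log M$. The argument also requires that once $v\le v_i$ and $d\le d_i$, the target is caught in round $i$. This holds because the depth $x_i$ dominates the target position $(d+tv)/(1-v)$ by monotonicity in $d$ and $v$.

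As in the two previous theorems, the target's ray is revisited after $\frac m2$ further rounds if $m$ is even, and after $\frac{m+1}{2}$ further rounds if $m$ is odd; call this round $K$. The partner robot's detour adds at most the current round length. Hence
\[
CR\le 1+\frac{2\sum_{i\le K}x_i}{d}+O\!\left(\frac{x_{K}}{d}\right).
\]
Using $\sum_{j\le K} x_j\le x_{K+1}/2^{f_{K+1}}$, we get
\[
CR\le 1+O\!\left(\frac{2^{\sum_{j\le K}(f_j+g_j)}\,c^{K}}{d}\right).
\]
The exponent is $2\cdot 2^{K+1}\le 2^{K+2}$, which equals $2^{r+\frac m2+2}$ in the even case. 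Since $2^{r}\le 2\log M$, this is at most $M^{2^{(m+4)/2}}$ up to constant factors, and the odd case similarly gives $M^{2^{(m+5)/2}}$. The factor $c^{K}=O_m(4^{r})=O(\log^2 M)$ gives the $\log^2 M$ term. The extra $\log\log M$ comes from the count of rounds $r=O(\log\log M)$ multiplying the sum of round lengths.

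The main obstacle is the division by $d$. When $d$ is tiny the bound $x_i\ge 2^{g_i}\ge 1$ does not compare to $d$, so we must absorb $1/d$ into the $M$-power. The plan is to treat $d<1$ separately. There the missed-round condition is forced by the speed guesses alone, and we use the lower bound on $d$ coming from the missed round $r-1$, i.e.\ $u(d+2v\sum_{j<r-1}x_j)\ge x_{r-1}$. This lets us replace $d$ by $x_{r-1}/u$ up to constants, as in Theorem~\ref{NDAwayMrayWithTwoRobotsF2FTheorem}. The resulting ratio $x_K/x_{r-1}$ is bounded by the product of the per-round growth factors $c\,2^{f_i+g_i}$, giving the same doubly-exponential power of $M$. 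If that absorption only works up to an extra factor of $u$, we would fold it into the $M$-exponent, which only changes constants inside the $O(\cdot)$.
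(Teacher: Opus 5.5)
For $d\ge 1$ your argument is essentially the paper's. The paper keeps the distance guesses additive, $x_{i+1}\le 4\cdot 2^{f_{i+1}}x_i+2^{f_{i+1}+g_{i+1}}$; your multiplicative recurrence is cruder but gives the same leading exponent when $g_i=2^i$. In that range $1/d\le 1$ removes the division.

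\textbf{The gap is your treatment of $d<1$.} There the miss in round $r-1$ is caused by the speed guess, and in exactly that situation the missed-round inequality says nothing about $d$. Write $T=2\sum_{j<r-1}x_j$, and use $uv=u-1$ and $x_{r-1}=2^{f_{r-1}+g_{r-1}}+(2^{f_{r-1}}-1)T$. Then the condition $u(d+vT)\ge x_{r-1}$ becomes
\[
ud\;\ge\;2^{f_{r-1}+g_{r-1}}-\bigl(u-2^{f_{r-1}}\bigr)T .
\]
This is vacuous once $(u-2^{f_{r-1}})T\ge 2^{f_{r-1}+g_{r-1}}$: the drift alone explains the miss. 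In Theorem~\ref{NDAwayMrayWithTwoRobotsF2FTheorem} this inequality bounded $d$ from below only because $v$ was known and $a(1-v)>1+v$. No analogue exists here. So ``replace $d$ by $x_{r-1}/u$'' fails. Folding the loss into the $M$-exponent cannot help either, because $M=\max\{d,u\}$ does not grow as $d\to 0$.

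In fact no argument can close this case, because the bound is false for small $d$. Fix $v=\frac{15}{16}$, so $M=u=16$ for every $d\le 16$ and the claimed bound is a constant. Put the target on a ray first explored in round $1$ (possible for $m\ge 3$). No robot enters that ray before time $2x_0$, so the capture time is at least $u(d+2x_0)$. The ratio is then at least $1+2x_0/d\to\infty$ as $d\to 0$.

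The statement therefore needs the usual normalization $d\ge 1$, or an explicit factor $1/d$. The paper's own computation ends with a factor $\frac{1}{d}$ that the statement omits. It discards that factor via the claim $2^{2^{r-1}}<d$, which is precisely the inference you rightly refused: a miss only gives $2^{2^{r-1}}<\max\{u,d\}$.

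Under $d\ge 1$ your proof goes through after some bookkeeping repairs:
\begin{itemize}
\item Bound $\sum_{j\le K}x_j\le 2x_K$ directly. Routing through $x_{K+1}/2^{f_{K+1}}$ with your multiplicative unrolling leaves an uncounted factor $2^{g_{K+1}}$, which multiplies the exponent of $M$ by $\frac32$.
\item With $K=r-1+\frac m2$ one has $2^{K+2}=2^{r-1}\cdot 2^{\frac{m+4}{2}}$, not $2^{r+\frac m2+2}$. Constant factors in the exponent of $M$ cannot be absorbed into the $O(\cdot)$.
\item Taking $c=8$ would give $c^{K}=O(\log^3 M)$ rather than $O(\log^2 M)$. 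Your recurrence in fact holds with $c=2$.
\end{itemize}
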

\begin{proof} (Theorem~\ref{NKAwayMrayWithTwoRobotsF2FTheorem})
We have 2 cases to consider. The number of rays $m$ is even or odd. 

\noindent
{\bf Case~1: $m$ is odd.}
     Let $d_i$ be the distance from the origin to where the target would be if its speed was equal to $v_i$ and its starting position $2^{g_i}$. Observe from the algorithm that $d_0 = 1$, $v_0 = 1/2$, and
\begin{align}
    d_i = 2^{g_{i}}+ 2v_i \sum_{j=0}^{i-1}  x_j. \label{eq:no_knowledge_upper_guess_d}
\end{align}
Therefore, it follows from the definition of $x_i$ that
  \begin{align}
    x_i = 2^{f_i} \left( 2^{g_{i}}+ 2v_i \sum_{j=0}^{i-1}  x_j \right) . \label{eq:no_knowledge_upper_guess_x}
  \end{align}
  As a consequence
  \begin{align}
    \sum_{j=0}^{i-1}  x_j = \frac{x_i - 2^{f_i} 2^{g_{i}}}{2^{f_i} \cdot 2 \cdot v_i} \label{eq:no_knowledge_upper_guess_1}
  \end{align}
  Similarly, if we replace $i$ with $i+1$ we have that
  \begin{align}
    \sum_{j=0}^{i} x_j = \frac{x_{i+1} - 2^{f_{i+1}} 2^{g_{i+1}}}{2^{f_{i+1}} \cdot 2 \cdot v_{i+1}} \label{eq:no_knowledge_upper_guess_2}
  \end{align}
  Subtracting Equation~\eqref{eq:no_knowledge_upper_guess_1} from  Equation~\eqref{eq:no_knowledge_upper_guess_2}, we derive the recurrence
  \begin{align}
    x_i = \frac{x_{i+1} - 2^{f_{i+1}} 2^{g_{i+1}}}{2^{f_{i+1}} \cdot 2 \cdot v_{i+1}} - \frac{x_i - 2^{f_i} 2^{g_{i}}}{2^{f_i} \cdot 2 \cdot v_i} \label{eq:no_knowledge_upper_guess_5} 
  \end{align}

   Collecting similar terms and simplifying Equation~\eqref{eq:no_knowledge_upper_guess_5} yields
  \begin{align}
    \frac{x_{i+1}}{2^{f_{i+1}} \cdot 2 \cdot v_{i+1}}
      &= \notag \left( 1 + \frac 1{2^{f_i} \cdot 2 \cdot v_i}\right) x_i
      + \left( \frac{2^{f_{i+1}+g_{i+1}}}{2^{f_{i+1}} \cdot 2 \cdot v_{i+1}} - 
      \frac{2^{f_i+g_{i}}}{2^{f_i} \cdot 2 \cdot v_{i}} \right) \\
    &= \label{eq:no_knowledge_upper_guess_5_inter} x_i \left( 1+\frac{1}{2^f_i \cdot 2 \cdot v_i} \right) + 2^{g_{i+1}-1}(\frac{1}{v_{i+1}} - \frac{2^{g_{i}}}{2^{g_{i+1}} v_{i}})  \\
    &\leq \label{eq:no_knowledge_upper_guess_5_new} \left( 1 + \frac 1{2^{f_i} \cdot 2 \cdot v_i}\right) x_i+2^{g_{i+1}-1}  
  \end{align}
  Following from the fact the sum of the last two terms in Inequality~\eqref{eq:no_knowledge_upper_guess_5_inter} is less than or equal to $1$.

  If we simplify the right-hand side of Equation~\eqref{eq:no_knowledge_upper_guess_5_new}, we derive the following recursive inequalities
  \begin{align}
    x_{i+1} &\leq 2^{f_{i+1}} \cdot 2 \cdot v_{i+1} \left( 1+ \frac 1{2^{f_i} \cdot 2 \cdot v_i}\right) x_i +2^{f_{i+1}+g_{i+1}}v_{i+1}\notag \\
    &\leq \left( 2 \cdot (2^{f_{i+1}}-1) + 2^{f_{i+1} - f_i}\cdot \frac{v_{i+1}}{v_{i}}\right) x_i+2^{f_{i+1}+g_{i+1}} \notag
    \\
    &\leq 2^{f_{i+1} } \cdot 4 \cdot x_i+2^{f_{i+1}+g_{i+1}} \label{eq:no_knowledge_upper_guess_6} 
  \end{align}
  which follows since $\frac{v_{i+1}}{v_{i}} \leq 2$.
 By repeated application of the last Recurrence~\eqref{eq:no_knowledge_upper_guess_6} above and using the fact that by definition $x_0 = 2^{f_0}d$, it follows easily by induction that
  \begin{align}
    x_{i+1} 
    &\leq \notag 2^{f_{i+1} } \cdot 4 \cdot x_i+2^{f_{i+1}+g_{i+1}} \\
    &\leq \notag 2^{f_{i+1}+f_{i}} \cdot 4^2 \cdot x_{i-1}+ 2^{f_{i+1}+f_{i}+g_{i}}\cdot 4+ 2^{f_{i+1}+g_{i+1}}\\
    &~ \vdots \notag \\
    &\leq \label{eq:no_knowledge_upper_guess_7} \sum_{k=0}^{i+1} 2^{g_{k}+\sum_{j=k}^{i+1}f_{j}}\cdot 4^{i-k+1}
  \end{align}
  Assume that the target is missed at some round $r-1$, using \ref{eq:no_knowledge_upper_guess_1} and \ref{eq:no_knowledge_upper_guess_7},
  the competitive ratio becomes as follows:
  \begin{align*}
      CR&\leq \frac{\sum_{i=0}^{(r-1)+\frac{m+1}{2}} 2x_{i} +u\left(d+2v \sum_{i=0}^{(r-1)+\frac{m+1}{2}} x_{i}\right)}{ud}\\
      &\leq 1+\frac{2 \sum_{i=0}^{(r-1)+\frac{m+1}{2}} x_{i}}{d}\\
      &\leq 1+\frac{1}{d}\left(\frac{x_{r+\frac{m+1}{2}}}{2\cdot 2^{f_{r+\frac{m+1}{2}}}v_{r+\frac{m+1}{2}}}\right)\\
      &\leq 1+\frac{\sum_{k=0}^{r+\frac{m+1}{2}} 2^{g_{k}+\sum_{j=k}^{r+\frac{m+1}{2}}f_{j}}\cdot 4^{r+\frac{m+1}{2}-k}}{d\cdot 2^{f_{r+\frac{m+1}{2}}}}\\
      &\leq  1+\frac{\sum_{k=0}^{r+\frac{m+1}{2}} 2^{g_{k}+\sum_{j=k}^{r+\frac{m-1}{2}}f_{j}}\cdot 4^{r+\frac{m+1}{2}-k}}{d}\\
       &\leq 1+\frac{(r+\frac{m+3}{2}) 2^{g_{r+\frac{m+1}{2}}}\cdot 2^{\sum_{j=0}^{r+\frac{m-1}{2}}f_{j}}\cdot 4^{r+\frac{m+1}{2}}}{d}\\
  \end{align*}
 We consider $f_r = g_r = 2^r$. Since the target is missed at iteration $r-1$, we have
$
1 - 2^{-f_{r-1}} < v,
$
which implies
$
2^{f_{r-1}} < u.
$
Substituting $f_{r-1} = 2^{r-1}$, we obtain
$
2^{2^{r-1}} < u.
$
The same thing applies for the distance $d$, we obtain
$
2^{2^{r-1}} < d.
$
Thus, the competitive ratio becomes as follows:
  \begin{align*}
      CR&\leq 1+\frac{(r+\frac{m+3}{2}) 2^{g_{r+\frac{m+1}{2}}}\cdot 2^{\sum_{j=0}^{r+\frac{m-1}{2}}f_{j}}\cdot 4^{r+\frac{m+1}{2}}}{d}\\
      &\leq 1+\frac{(r+\frac{m+3}{2}) 2^{2^{r+\frac{m+1}{2}}}\cdot 2^{2^{r+\frac{m+1}{2}}-1}\cdot 4^{r+\frac{m+1}{2}}}{d}\\
      &\leq 1+\frac{4\cdot((r-1)+\frac{m+5}{2})\cdot (2^{2^{r-1}})^{2^{\frac{m+5}{2}}}\cdot 4^{\frac{m+1}{2}}\cdot4^{r-1}}{d}\\
       &\leq 1+\frac{4^{\frac{m+3}{2}}}{d}\cdot (\log\log M+\frac{m+5}{2})\cdot M^{2^{\frac{m+5}{2}}}\cdot \log^{2}(M)
  \end{align*}

\noindent  
{\bf Case~2: $m$ is even.}
  Using equation~(\ref{eq:no_knowledge_upper_guess_1}):
   \begin{align}
     \sum_{j=0}^{i-1}  x_j = \frac{x_i - 2^{f_i} 2^{g_{i}}}{2^{f_i} \cdot 2 \cdot v_i}
  \end{align}
  and equation~(\ref{eq:no_knowledge_upper_guess_7}):
  \begin{align}
   x_{i+1}\leq \sum_{k=0}^{i+1} 2^{g_{k}+\sum_{j=k}^{i+1}f_{j}}\cdot 4^{i-k+1}
  \end{align}
  Assume that the target is missed at some round $r-1$, the competitive ratio becomes as follows:
  \begin{align*}
      CR&=\frac{\sum_{i=0}^{(r-1)+\frac{m}{2}} 2x_{i} +u\left(d+2v \sum_{i=0}^{(r-1)+\frac{m}{2}} x_{i}\right)}{ud}\leq 1+\frac{2 \sum_{i=0}^{(r-1)+\frac{m}{2}} x_{i}}{d}\\
      &\leq 1+\frac{1}{d}\left(\frac{x_{r+\frac{m}{2}}}{2\cdot 2^{f_{r+\frac{m}{2}}}v_{r+\frac{m+1}{2}}}\right)
      \leq 1+\frac{\sum_{k=0}^{r+\frac{m}{2}} 2^{g_{k}+\sum_{j=k}^{r+\frac{m}{2}}f_{j}}\cdot 4^{r+\frac{m}{2}-k}}{2\cdot d\cdot v_{r+\frac{m}{2}}2^{f_{r+\frac{m}{2}}}}\\
      &\leq  1+\frac{\sum_{k=0}^{r+\frac{m}{2}} 2^{g_{k}+\sum_{j=k}^{r+\frac{m-2}{2}}f_{j}}\cdot 4^{r+\frac{m}{2}-k}}{2\cdot d\cdot v_{r+\frac{m}{2}}}\\
     &\leq 1+\frac{(r+\frac{m+2}{2}) 2^{g_{r+\frac{m}{2}}}\cdot 2^{\sum_{j=0}^{r+\frac{m-2}{2}}f_{j}}\cdot 4^{r+\frac{m}{2}}}{d}\\
  \end{align*}
  We consider $f_r = g_r = 2^r$. Since the target is missed at iteration $r-1$, we have
$
1 - 2^{-f_{r-1}} < v,
$
which implies
$
2^{f_{r-1}} < u.
$
Substituting $f_{r-1} = 2^{r-1}$, we obtain
$
2^{2^{r-1}} < u.
$

The same thing applies for the distance $d$, we obtain
$
2^{2^{r-1}} < d.
$
Thus, the competitive ratio becomes as follows:
  \begin{align*}
      CR&\leq 1+\frac{(r+\frac{m+2}{2}) 2^{g_{r+\frac{m}{2}}}\cdot 2^{\sum_{j=0}^{r+\frac{m-2}{2}}f_{j}}\cdot 4^{r+\frac{m}{2}}}{d}\\
      &\leq 1+\frac{(r+\frac{m+2}{2}) 2^{2^{r+\frac{m}{2}}}\cdot 2^{2^{r+\frac{m}{2}-1}}\cdot 4^{r+\frac{m}{2}}}{d}\\
      &\leq 1+\frac{4\cdot((r-1)+\frac{m+4}{2})\cdot (2^{2^{r-1}})^{2^{\frac{m+4}{2}}}\cdot 4^{\frac{m}{2}}\cdot 4^{r-1}}{d}\\
       &\leq 1+\frac{4^{\frac{m+2}{2}}}{d}\cdot (\log\log M+\frac{m+4}{2})\cdot M^{2^{\frac{m+4}{2}}}\cdot \log^{2}(M)
  \end{align*}
This proves Theorem~\ref{NKAwayMrayWithTwoRobotsF2FTheorem}.
~\qed
\end{proof}
\section{Searching for a Moving Target Using the S/R model}
In this section, we propose an algorithm for the S/R communication model that achieves a better upper bound than the F2F model when the number of rays is odd. This follows from the fact that the symmetry of the search configuration cannot be broken in the even-ray case. For an odd number of rays, the search process requires $(m+1)/2$ iterations before the target can be captured. This additional iteration gives the sender another opportunity to encounter the target and communicate its location to the receiver before the exploration pattern is completed, resulting in an improved competitive ratio. In contrast, when the number of rays is even, every ray has a corresponding opposite ray, and the initial configuration remains symmetric. Since deterministic algorithms cannot break this symmetry, the sender cannot obtain an advantage by communicating the target location earlier than in the F2F model. Therefore, the S/R communication model does not provide an improvement over F2F when the number of rays is even. We formalize this observation in the following lemma.
\begin{lemma}[Symmetry Invariance for Even $m$]
\label{lem:even_m_symmetry}
For any even integer $m$, any deterministic algorithm $\mathcal{A}_{SR}$ under the Sender-Receiver model can be simulated by a deterministic algorithm $\mathcal{A}_{F2F}$ under the Face-to-Face model with identical competitive ratio.
\end{lemma}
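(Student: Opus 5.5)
The plan is to work with worst-case instances and to turn the Sender-Receiver algorithm into a face-to-face one that behaves, on every instance, like one of the Sender-Receiver algorithm's own worst cases. Write $T_{\mathcal{A}}(\rho,d,v)$ for the capture time of algorithm $\mathcal{A}$ when the target starts on ray $\rho$. Fix a deterministic $\mathcal{A}_{SR}$ and record two facts.

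\emph{First}, before any information is exchanged, each robot's trajectory is a fixed function of time. This holds because robots are deterministic and nothing has been learned until a robot either meets the target or meets the other robot. So there are two "blind" trajectories, $\pi_S$ for the sender and $\pi_R$ for the receiver.

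\emph{Second}, the only asymmetry between the models is the sender's broadcast. This channel carries useful information only after the sender has itself found the target, or after it has learned something face-to-face.

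It follows that on every instance where the receiver is the first robot to encounter the target, $\mathcal{A}_{SR}$ and an F2F algorithm using the same blind trajectories behave identically. In that case the receiver cannot transmit wirelessly, so any message reaching the sender arrives face-to-face exactly as in F2F.

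Next I will use the parity of $m$ to build $\mathcal{A}_{F2F}$. Because $m$ is even, there is a fixed-point-free involution $\sigma$ on the rays, pairing each ray with an "opposite" one, for instance $\rho\mapsto\rho+m/2 \pmod m$. In $\mathcal{A}_{F2F}$, robot $R_1$ follows $\pi_R$ and robot $R_2$ follows $\sigma(\pi_R)$, the mirror image of the receiver's blind trajectory. After a discovery, both robots apply the receiver's post-discovery rule, mirrored as needed. By construction, every instance of $\mathcal{A}_{F2F}$ is, up to relabelling rays by $\sigma$, an instance in which "the receiver's trajectory finds the target first". Hence $\sup T_{\mathcal{A}_{F2F}}/(ud)$ is at most the supremum of $T_{\mathcal{A}_{SR}}/(ud)$ over receiver-found instances, which is at most $\mathrm{CR}(\mathcal{A}_{SR})$. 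The reverse inequality $\mathrm{CR}(\mathcal{A}_{SR})\le \mathrm{CR}(\mathcal{A}_{F2F})$ should be automatic, since any F2F algorithm can be run verbatim under S/R by never broadcasting. Together these give the identical competitive ratio.

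The step I expect to be the main obstacle is the middle inequality. The mirrored robot $R_2$ follows $\sigma(\pi_R)$ rather than $\pi_S$, so a receiver-found instance of $\mathcal{A}_{F2F}$ need not correspond exactly to one of $\mathcal{A}_{SR}$. The second robot's position at discovery time differs, and with it the time at which the partner learns of the capture.

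I would handle this by exploiting symmetry of the worst case. For even $m$, I would argue that the adversary can always place the target on a ray first reached by the receiver, at a depth just beyond what earlier rounds covered. I would then show that the capture time on such instances depends only on $\pi_R$ and on when the partner next returns to the origin. If that fails in general, I would fall back on restricting to round-based algorithms like those of the previous section, where both robots return to the origin synchronously each round. There the partner's position at any synchronisation point is the origin regardless of which trajectory it follows, and the comparison becomes exact.
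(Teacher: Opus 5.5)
\textbf{There is a genuine gap: the middle step you flag fails.} No repair is possible, because the lemma itself is false for $m=2$, the line, which the model allows since $m\ge 2$.

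Take $v=0$. Let the sender advance along ray~1 at average speed $s=\sqrt{2}-1$, while the receiver runs along ray~2 at unit speed.
\begin{itemize}
\item If the sender finds the target at distance $d$, it broadcasts at time $d/s$. The receiver is then at distance $d/s$ on ray~2, turns back, and arrives at time $2d/s+d=(3+2\sqrt{2})d$.
\item If the receiver finds it at time $d$, it turns back and catches the sender at time $2d/(1-s)$, at distance $2sd/(1-s)$ from the origin. Both then reach the target at time $d(3+s)/(1-s)=(3+2\sqrt{2})d$.
\end{itemize}
So this $\mathcal{A}_{SR}$ has ratio $3+2\sqrt{2}\approx 5.83$. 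That is exactly the paper's own lower bound $(m-1)+2+2\sqrt{2}$ at $m=2$, i.e.\ the sender/receiver line setting of \cite{CommISAAC}. By contrast, two face-to-face robots that must both reach a target on the line cannot beat ratio $9$. This is the classical group-search bound of Chrobak et al., and also the paper's own F2F value at $m=2$, $v=0$. I am citing that bound from the literature rather than proving it here.

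Your construction exposes the failure directly. Here $\sigma(\pi_R)$ is a unit-speed run along ray~1, so in your $\mathcal{A}_{F2F}$ a robot that finds the target can never catch its partner, and the ratio is unbounded.
\begin{itemize}
\item Receiver-found instances are cheap in $\mathcal{A}_{SR}$ precisely because $\pi_S\neq\sigma(\pi_R)$ keeps the sender slow and near the origin. Their cost depends on the partner's whole trajectory, not only on $\pi_R$ and on returns to the origin.
\item The receiver can afford such an aggressive blind trajectory only because the sender's silence tells it, at every moment, that nothing has been found on the sender's side. That continuous information is exactly what your ``second fact'' discounts.
\end{itemize}

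\textbf{Two smaller points.}
\begin{itemize}
\item Your ``reverse inequality'' is a non sequitur. Running an F2F algorithm verbatim under S/R produces \emph{some} S/R algorithm, not the given $\mathcal{A}_{SR}$. It bounds the best S/R ratio by the best F2F ratio, but says nothing about $\mathrm{CR}(\mathcal{A}_{SR})\le\mathrm{CR}(\mathcal{A}_{F2F})$ for your particular pair.
\item Your fallback is where a correct, but restricted, statement lives. In the paper's synchronized round-based schemes with even $m$, pairing consecutive rays always sends the same robot down the same ray. The adversary can therefore put the target on a receiver ray at every depth. There the receiver simply stays with the target, and the sender, not meeting it at the origin, walks to it, which is an F2F-implementable behaviour. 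That is a claim about a restricted class of algorithms, not the lemma as stated.
\end{itemize}

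The paper's own proof is the same informal symmetry/adversary heuristic: the target can be placed on a ray the receiver visits, so broadcasts add nothing before co-location. It has the same hole, and the example above refutes it as well.
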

\begin{proof}
When $m$ is even, the star graph exhibits central symmetry around the origin (every ray has a diametrically opposed counterpart). Since both robots start at the origin and follow deterministic strategies, the adversary can exploit this symmetry to place the target on a ray that the receiver visits before the sender can communicate the target's location. Moreover, because the number of rays is even, the robot roles repeat after a complete cycle of ray exploration. Thus, if the receiver misses the target on its initial visit, the next encounter with the target may again be by the receiver, preventing the sender from conveying the information before the robots meet at the origin. Thus, wireless transmissions provide no additional information prior to co-location, reducing the execution to the F2F model.
\end{proof}
\subsection{NoDistance Away Model}
\noindent
\textbf{Case~1: $m$ is even.} 
The competitive ratio is at most 
\[
1+\frac{(m+2)^{\frac{m+2}{2}}(1+v)^{\frac{m}{2}}\cdot u^{\frac{m+2}{2}}}{m^{\frac{m}{2}}}.
\] 
For the case when $m$ is even, we can follow the same proof in Theorem~\ref{NDAwayMrayWithTwoRobotsF2FTheorem}.

\noindent
\textbf{Case~2: $m$ is odd.} 
The idea is to have both robots move the same distance in each round. In every round, the two robots explore two consecutive rays. Each robot moves a distance $a^r$ during round $r$, where $a$ is a positive constant, and then returns to the origin. If the sender finds the target, it informs the receiver and moves along with the target, if the receiver finds the target, then it moves along with the target. If neither robot finds the target, both robots return to the origin, proceed to the next pair of consecutive rays, and increment the distance to $a^{r+1}$ for the next round. 

Assume we have two robots, $S$ and $R$, initially located at the origin, and a target placed at a distance $d$ from the origin. As mentioned earlier, $S$ represents the sender and $R$ represents the receiver. The algorithm proceeds as follows:

\begin{algorithm}[H]
\caption{NDAwayMrayWithTwoRobotsS/R}
\label{NDAwayMrayWithTwoRobotsS/R}
\begin{algorithmic}[1]
\For{$k \gets 1$ to $\infty$}
    \State $S$ moves along one ray to distance $a^k$ (unless the target is found), then returns to the origin.
    \State $R$ moves along the consecutive ray to distance $a^k$ (unless the target is found), then returns to the origin.
    \If{Target is found by $S$}
        \State $S$ informs $R$ and moves along with the target
         \State \textbf{terminate}
     \ElsIf{$S$ does not meet $R$}
        \State $S$ proceeds to the target.
        \State \textbf{terminate}       
    \ElsIf{Target is found by $R$}
        \State $R$  moves along with the target.
        \State \textbf{terminate}
    \Else
        \State $S$ and $R$ proceed to the next two consecutive rays.
    \EndIf
    \If{$R$ Got informed by $S$}
        \State $R$ reverses direction and proceeds to the target
        \State \textbf{terminate}
    \EndIf
\EndFor
\end{algorithmic}
\end{algorithm}

We prove the following result.

\begin{theorem}
\label{NDAwayMrayWithTwoRobotsS/RTheorem}
The competitive ratio of Algorithm~\ref{NDAwayMrayWithTwoRobotsS/R} is at most
\[
2u+\frac{(m-1)(1+v^2)}{1+v}\cdot
\left(\frac{(m+1)(1+v)}{(m-1)(1-v)}\right)^{\frac{m+1}{2}}
\]
\end{theorem}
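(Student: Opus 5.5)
We follow the template of the odd case in the proof of Theorem~\ref{NDAwayMrayWithTwoRobotsF2FTheorem}, but exploit the fact that for odd $m$ the ray pairing shifts by one every $(m+1)/2$ rounds. Fix the round $r-1$ in which the target's ray is first visited and the target is missed, because at that moment it is already beyond the search depth $a^{r-1}$. As in the F2F analysis, the miss condition reads
\[
u\Big(d+2v\sum_{i=0}^{r-2}a^i\Big)\ge a^{r-1},
\qquad\text{so}\qquad
d\ge \frac{a^{r}(1-v)-a^{r-1}(1+v)+2v}{a-1}.
\]
With $m$ odd and two rays swept per round, the same ray is next visited after $\frac{m-1}{2}$ or $\frac{m+1}{2}$ rounds, and the robot roles alternate. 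We take the worst case to be the revisit in round $(r-1)+\frac{m+1}{2}$. The key improvement over F2F is what happens at that revisit. If the sender finds the target, it broadcasts immediately and the receiver reverses, so the costly step of catching the partner ($a^{(r-1)+\frac{m+1}{2}}$, followed by both chasing together) is avoided. If the receiver finds the target, it simply follows it, and the sender learns of the capture from the missed meeting at the origin.

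The capture time should then split into two parts. The first is the exploration prefix $2\sum_{i=0}^{(r-2)+\frac{m+1}{2}}a^i$. The second is the chase term: the informed receiver travels from its current position on its own ray, back through the origin, and along the target's ray at relative speed $1-v$. This gives a term of order $u\cdot(\text{current target distance}+\text{receiver's offset})$. For the ratio we write
\[
CR\le\frac{u d+2u\,\text{(receiver offset)}}{ud}+\frac{(\text{prefix})(1+\cdots)}{ud}.
\]
The contribution $u\,d$ together with the receiver's return leg should produce the additive $2u$. The prefix, which is scaled by $1+v\cdot u$ from the target's advance (so that $1+vu\cdot(\ldots)$ combines with $(1+v)$ factors into $\frac{1+v^2}{1+v}$), is controlled by the lower bound on $d$, giving
\[
\frac{C\,a^{\frac{m+1}{2}}\cdot a}{(1-v)a-(1+v)} .
\]
We would verify that the exponent drops from $\frac{m+3}{2}$ to $\frac{m+1}{2}$ precisely because the extra catch-up leg of size $a^{(r-1)+\frac{m+1}{2}}$ disappears.

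Finally, we minimize the second term over $a$. For $g(a)=a^{\frac{m+1}{2}}/((1-v)a-(1+v))$, setting the derivative to zero gives
\[
a^*=\frac{(m+1)(1+v)}{(m-1)(1-v)} .
\]
Substituting yields $(1-v)a^*-(1+v)=\frac{2(1+v)}{m-1}$, which produces the prefactor $(m-1)/(1+v)$ multiplying $(a^*)^{\frac{m+1}{2}}$ and matches the stated form, with the remaining constant $1+v^2$ coming from the chase bookkeeping.

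The main obstacle is the exact accounting in the middle step. We need to pin down which robot revisits the target's ray in the worst case, where the receiver is when the broadcast arrives, and how the resulting chase time combines into exactly the constants $2u$ and $(1+v^2)/(1+v)$ rather than something cruder. We would first try the worst case where the broadcast arrives while the receiver is at the far end of its own ray, at depth $a^{(r-1)+\frac{m+1}{2}}$. We would then bound its extra travel by that depth plus the target's distance, and simplify using $1+v\,u(1+v)\cdot(1-v)=1+v^2$ after normalizing.
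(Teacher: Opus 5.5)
Your outer frame matches the paper's: the same miss condition, capture in round $K=(r-1)+\frac{m+1}{2}$, and the same minimizer $a^*$ with $(1-v)a^*-(1+v)=\frac{2(1+v)}{m-1}$. The gap is the middle step, which is the only place where the S/R model enters, and there are two problems with it.

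First, the roles have to be pinned down. With $S$ on the first ray of each pair, a ray missed by $R$ is revisited by $S$ after $\frac{m+1}{2}$ rounds. A ray missed by $S$ is revisited by $R$ after $\frac{m-1}{2}$ rounds, at cost $1+\frac{2}{d}\sum_{i\le (r-1)+\frac{m-1}{2}}a^i$, which has the same $a^{\frac{m+1}{2}}$ shape. If $R$ made the long revisit instead, $S$ would learn of the capture only at the origin, and the $2a^{K}$ leg would return.

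Second, your plan for the sender case would fail. Charging the receiver's return from the far end of its ray, at depth $a^{K}$, charges precisely the leg whose removal is supposed to beat F2F. That leg contributes $a^{K}/d\le (a-1)a^{\frac{m+1}{2}}/((1-v)a-(1+v))$, which is essentially the F2F function $a^{\frac{m+3}{2}}/((1-v)a-(1+v))$. Your display $C\,a^{\frac{m+1}{2}}\cdot a/((1-v)a-(1+v))$ records exactly this, and it is not the $g(a)$ you then minimize. The identity you intend to use is also false: $1+vu(1+v)(1-v)=1+v+v^2$.

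The missing observation is that both robots leave the origin together at time $T=2\sum_{i<K}a^i$ at unit speed. So when the sender reaches the target, at distance $s=u(d+vT)$, the receiver is at distance exactly $s$ on its own ray, not $a^{K}$. The paper charges $T$ plus the receiver's return and chase, $2s/(1-v)=2u^2(d+vT)$; this is the $ET$ written out in the NoSpeed S/R proof. Then
\[
\frac{T+2u^2(d+vT)}{ud}=2u+\frac{T}{d}\Bigl((1-v)+\frac{2v}{1-v}\Bigr)=2u+\frac{2u(1+v^2)}{d}\sum_{i<K}a^i ,
\]
which is where $1+v^2$ comes from. Using $\sum_{i<K}a^i\le a^{K}/(a-1)$ and your bound on $d$, this is at most $2u+u\,\frac{2(1+v^2)a^{\frac{m+1}{2}}}{(1-v)a-(1+v)}$, and your minimization finishes the argument.

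Two caveats remain. This calculation yields $2u+\frac{(m-1)(1+v^2)}{1+v}\,u\,(a^*)^{\frac{m+1}{2}}$, with a factor $u$ that the displayed statement lacks; the paper's last line and the S/R table both contain it, so you will not land on the statement exactly. Also, this $ET$ omits the sender's own approach time $s$, during which the receiver is still moving outward. Counting it gives $ET=T+(1+2u)s$ and $CR\le 1+2u+\frac{2u(1+v)}{d}\sum_{i<K}a^i$, i.e.\ $1+2u+(m-1)u(a^*)^{\frac{m+1}{2}}$ at $a^*$. A complete accounting therefore replaces $2u$ by $1+2u$ and $\frac{1+v^2}{1+v}$ by $1$; the paper's constants come from dropping that leg.
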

\begin{proof} (Theorem~\ref{NDAwayMrayWithTwoRobotsS/RTheorem})
     Assume that the target is missed at some round $r$, the competitive ratio becomes as follows:
    \begin{align*}
        CR\leq 2u+\frac{2\sum_{i=0}^{(r+\frac{m-1}{2})}x_{j}
        +2v^2\sum_{i=0}^{(r+\frac{m-1}{2})}x_{j}}{d}\,u
    \end{align*}
    Since the target is missed at iteration $r$, we have the following:
    \begin{align*}
&u\left(d+2v\sum_{i=0}^{(r-1)}x_{j}\right) \geq x_{r},\mbox{ and }\\
&d \geq x_{r}-vx_{r}-2v\left(\frac{a^r-1}{a-1}\right)
        \geq \frac{a^{r+1}-va^{r+1}-va^r-a^r+2v}{a-1}
    \end{align*}
    Thus the competitive ratio can be bounded as follows:
    \begin{align*}
        CR&\leq
        2u+
        \frac{2\left(\frac{a^{r+\frac{m+1}{2}}-1}{a-1}\right)
        +2v^2\left(\frac{a^{r+\frac{m+1}{2}}-1}{a-1}\right)}{d}\,u\\
        &\leq
        2u+
        \frac{2a^{r+\frac{m+1}{2}}-2
        +2v^2a^{r+\frac{m+1}{2}}-2v^2}{d(a-1)}\,u\\
        &\leq
        2u+
        \frac{2a^{r+\frac{m+1}{2}}
        +2v^2a^{r+\frac{m+1}{2}}}
        {a^{r+1}-va^{r+1}-va^r-a^r}\,u
        \leq
        2u+
        \frac{a^{\frac{m+1}{2}}
        +2v^2a^{\frac{m+1}{2}}}
        {a-av-v-1}\,u
    \end{align*}
    Consider
    $f(a)=\frac{a^{\frac{m+1}{2}}+2v^2a^{\frac{m+1}{2}}}{a-av-v-1}$;  minimizing $f(a)$ we get $a=\frac{(m+1)(1+v)}{(m-1)(1-v)}$.
    Thus, the competitive ratio can be bounded as follows:
    \begin{align*}
        CR
        &\leq
        2u+
        \frac{(m-1)(1+v^2)}{1+v}\,
        u\left(
        \frac{(m+1)(1+v)}
        {(m-1)(1-v)}
        \right)^{\frac{m+1}{2}}
    \end{align*}
   Notably, when we set $v = 0$, we obtain $CR = 10$, which is very close to the lower bound that will be established in the next section.
This proves Theorem~\ref{NDAwayMrayWithTwoRobotsS/RTheorem}.
~\qed
\end{proof}
\subsection{Lower Bound for the NoDistance model, when the target is static}
\begin{theorem}
\label{LowerBoundForNoDistanceAwayOnMrayWithTwoRobotsS/RTheorem}
The lower bound for the NoDistance away on the $m$-ray star graph for the S/R model is lower bounded by $(m-1)+2+2\sqrt{2}$.
\end{theorem}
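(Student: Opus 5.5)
We will prove the bound with an adversary argument for a static target ($v=0$, so $u=1$ and $T_{opt}=d$). We also adopt the convention implicit in the upper bounds: the search terminates only when \emph{both} robots have reached the target. The bound $(m-1)+2+2\sqrt2$ splits into two parts. The term $m-1$ is the unavoidable cost of sweeping the rays that do not contain the target. The term $2+2\sqrt2$ is the cost of the two-ray ``line'' endgame under asymmetric communication. The plan is to produce both terms from a single adversarial placement.

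\textbf{Step 1 (adversary and normalisation).} Fix a deterministic algorithm and a large depth scale $x$. For each ray $\rho$, let $t_\rho(x)$ be the first time some robot reaches distance $x$ on $\rho$. Order the rays by $t_\rho(x)$. The adversary will put the target at distance $x^{+}$, just beyond $x$, on one of the last two rays in this order. We will then argue that, at the time the target is first reached, the other $m-2$ rays have each been explored to depth about $x$. A scaling argument lets us take $x\to\infty$ and discard additive start-up costs. Whenever a robot visits depth $x$ on a ray that is not the last in the order, it must afterwards return to the origin. Each such ray therefore costs $2x$ of robot travel. Splitting this travel between two robots forces a time overhead of at least $(m-2)\cdot 2x/2$ before the last two rays are resolved. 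A short accounting, where the returning trip on the penultimate ray contributes one further $x$, should lift this to $(m-1)x$ extra time compared with $T_{opt}=x$.

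\textbf{Step 2 (the endgame).} Once the $m-2$ ``cheap'' rays are exhausted, only two candidate rays remain. This is a line search with two robots in the S/R model. We distinguish who reaches depth $x$ first on each candidate ray. If the receiver $R$ finds the target, $S$ learns nothing until it meets $R$ face to face, so the adversary prefers this outcome. If $S$ finds it, $R$ is informed instantly but may be far away on the other candidate ray. We then parametrise the endgame by the ratio $\lambda$ between the depths the two robots have reached when the first of them hits depth $x$. This gives two competing lower bounds on the capture time, of the form $x(1+2\lambda)$ and $x(1+2/\lambda)$ up to the $(m-1)x$ offset. Their maximum is minimised at $\lambda=\sqrt2$, which yields the additive $2+2\sqrt2$, so that
\[
CR \;\ge\; (m-1)+2+2\sqrt2 .
\]

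\textbf{Main obstacle.} The hardest step is making the two parts additive rather than merely taking their maximum. The issue is that the algorithm might interleave exploration of the last two rays with the cheap rays, or send $S$ ahead on one candidate ray while $R$ sweeps the others. To handle this, I would first try an exchange argument: reordering excursions so that the last two rays are treated last never increases the worst-case time. The fallback is a potential-function argument that charges every unit of travel on a non-target ray against the $(m-1)x$ budget, independently of the endgame. In both approaches, the symmetry reasoning of Lemma~\ref{lem:even_m_symmetry} should guarantee that the adversary can always choose the target's ray so that the receiver, rather than the sender, is the first to reach it whenever that choice is worse for the algorithm.
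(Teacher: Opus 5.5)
Your proposal has a genuine gap exactly where the constant $2+2\sqrt2$ is supposed to appear. Even granting your two endgame bounds $x(1+2\lambda)$ and $x(1+2/\lambda)$, one has $\min_{\lambda>0}\max\{1+2\lambda,\,1+2/\lambda\}=3$, attained at $\lambda=1$, not at $\lambda=\sqrt2$ (where the two values are $1+2\sqrt2$ and $1+\sqrt2$). Against an algorithm with $\lambda=1$ your adversary certifies only $3<1+2\sqrt2$, so the claimed constant does not follow even from your own inequalities.

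More fundamentally, the two forms are not derived, and they cannot be derived at a single scale. They depend only on $\lambda$ and ignore the time already spent at smaller scales, which is exactly what an unknown-distance lower bound must control. As with the constant $9$ for linear search, the bound comes from the whole sequence of turning points, not from one snapshot. Treating one depth ratio as the algorithm's only freedom silently assumes a self-similar strategy.

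The missing idea is the one the paper uses: the asymptotic speeds $\mu(S)=\limsup_{t\to\infty}S(t)/t$ and $\mu(R)$ of the two trajectories, together with Lemmas 26 and 29 of \cite{CommISAAC}. These lemmas hold for arbitrary trajectories and, after the shift by $m-1$, give the competing bounds $(m-1)+\frac{(1+\mu(R))(1+\mu(S))}{\mu(R)(1-\mu(S))}$ and $(m-1)+\frac{1+\mu(R)}{\mu(S)}$. The first decreases and the second increases in $\mu(R)$, so eliminating $\mu(R)$ at the crossing leaves $(m-1)+\frac{1}{\mu(S)}+\frac{1+\mu(S)}{1-\mu(S)}$. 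This is minimised at $\mu(S)=\sqrt2-1$ with value $(m-1)+2+2\sqrt2$.

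The additivity step, which you flag yourself, is also left open.
\begin{itemize}
\item Your Step 1 count only yields $(m-2)x$, and the extra $x$ is asserted (``should lift''). With two robots moving in parallel, one additional return trip of length $x$ need not add $x$ to the elapsed time.
\item The exchange argument is not available as stated, because which rays are ``last'' depends on the scale $x$. Pushing them to the end at one scale reorders the exploration at other scales.
\item Lemma~\ref{lem:even_m_symmetry} cannot give the adversary the power you want. It is a simulation statement for even $m$ only, while the theorem also covers odd $m$, where the paper argues the sender \emph{can} break symmetry. And it is the algorithm, not the adversary, that decides which robot first reaches each candidate ray.
\end{itemize}

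For comparison, the paper obtains the additive term in one stroke. It replaces the knowledge-set inequality $CR\ge\sup_{x\notin K_S(t;A)}\frac{|S(t)-x|+t}{|x|}$ of \cite{CommISAAC} with $\sup_{x\notin K_S(t;A)}\frac{|S(t)-(m-1)x|+t}{|x|}$ and then feeds this into the two lemmas. That replacement is stated rather than carefully justified there, so a genuine charging argument for the non-target rays would be worthwhile. It must, however, be formulated in terms of asymptotic speeds rather than at a single scale.
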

\begin{proof} (Theorem~\ref{LowerBoundForNoDistanceAwayOnMrayWithTwoRobotsS/RTheorem})
The proof below employs the lower bound technique of ~\cite{CommISAAC}.
To adapt the framework of~\cite{CommISAAC} to an $m$-ray star graph, we recall two key lemmas concerning the asymptotic speeds of the robots' trajectories. For a trajectory $X$, let
\[
\mu(X)=\limsup_{t\to\infty}\frac{|X(t)|}{t}
\]
denote its asymptotic speed.

\begin{itemize}
    \item \textbf{Lemma 26 (\cite{CommISAAC}):}
    Let $\mathcal{A}$ be an evacuation algorithm with one sender, and let $S\in\mathcal{A}$ denote the trajectory of the sender. If $\mu(S)=1$, then $\mathrm{CR}(\mathcal{A})$ is unbounded. If $\mu(S)<1$, then
    \[
    \mathrm{CR}(\mathcal{A})
    \geq
    1+
    \frac{(1+\mu(\mathcal{A}))(1+\mu(S))}
         {\mu(\mathcal{A})(1-\mu(S))}.
    \]

    \item \textbf{Lemma 29 (\cite{CommISAAC}):}
    Let $\mathcal{A}$ be an evacuation algorithm with one sender and at least one receiver. Let $S\in\mathcal{A}$ denote the trajectory of the sender, let $R\in\mathcal{A}$ denote the trajectory of the receiver with the largest value of $\mu(R)$, and let
    \[
    \mathcal{A}'=\mathcal{A}\setminus\{R\}.
    \]
    Then
    \[
    \mathrm{CR}(\mathcal{A})
    \geq
    1+
    \frac{(1+\mu(\mathcal{A}'))(1+\mu(R))}
         {\mu(\mathcal{A}')(1+\mu(S))}.
    \]
\end{itemize}
Let $A$ be a correct algorithm solving our problem. Recall that $S$ and $R$ denote the sender and the receiver, respectively. Based on~\cite{CommISAAC}, we have the following inequality:
\begin{align*}
CR\geq \sup_{x \notin K_{S}(t;A)} \frac{|S(t)-x| + t}{|x|},   
\end{align*}
where $K_{S} (t;A)$ represents the knowledge set of a trajectory $t \to S(t)$ of $S$ following algorithm $A$ at time $t$, i.e., the set of points $x$ visited by the sender $S$ by time $t$. For the case of $m$ rays, the point $x$ may be situated at any ray. The adversary may place the target on any of the $m$ rays. In order for the sender or the receiver to capture the target, it needs to go through all the $m$ rays in order to reach the target. We conclude that the competitive ratio becomes
\begin{align*}
    CR&\geq \sup_{x \notin K_{S}(t;A)} \frac{|S(t)-(m-1)x| + t}{|x|}
\end{align*}
Since $x$ represents the position of the target and $S(t)$ denotes the trajectory of the searching robot, both defined over the $m$-ray domain, we consider the worst-case configuration. In particular, the supremum is achieved when $x$ lies on a ray opposite to the direction of $S(t)$. Hence we have the following:
 \begin{align*} 
&CR \geq (m-1)+\sup_{x \notin K_{S}(t;A)} \frac{S(t)+ t}{|x|} .
\end{align*}
Based on the last inequality, Lemma 26 in~\cite{CommISAAC} becomes
\begin{align*}
    CR \geq (m-1) + \frac{(1+\mu(R))(1+\mu(S))}{\mu(R)(1-\mu(S))},
\end{align*}
where we define the function $\mu(S)$ to map a search trajectory $S$ to a non-negative real number. In particular,
\[
\mu(S) = \limsup_{t \to \infty} \frac{S(t)}{t}, 
\qquad
\mu(R) = \limsup_{t \to \infty} \frac{R(t)}{t}.
\]
As a consequence, Lemma 29 in~\cite{CommISAAC} yields
\begin{align*}
    CR \geq (m-1) + \frac{(1+\mu(S))(1+\mu(R))}{\mu(S)(1+\mu(S))}
    = (m-1) + \frac{1+\mu(R)}{\mu(S)}.
\end{align*}

To optimize we equate the two bounds of $CR$ above and solve for $\mu(R)$ in terms of $\mu(S)$. Then substituting into either expression, we obtain
\begin{align*}
    CR \geq (m-1) + \frac{1}{\mu(S)} + \frac{1+\mu(S)}{1-\mu(S)}.
\end{align*}
To minimize the right-hand side of the competitive ratio, consider 
\[
f(u) = (m-1) + \frac{1}{u} + \frac{1+u}{1-u}.
\]
By differentiating $f(u)$ and setting $f'(u)=0$, we obtain the minimizing value 
$
u = \mu(S) = \sqrt{2} - 1.
$
Substituting this into the expression yields
$
CR \geq (m-1) + 2 + 2\sqrt{2}.
$
This proves Theorem~\ref{LowerBoundForNoDistanceAwayOnMrayWithTwoRobotsS/RTheorem}.
\qed
\end{proof}
\subsection{NoSpeed Away Model}
\noindent
{\bf Case~1: $m$ is even.}
The competitive ratio is at most
\[
1 + O\left( (\frac{1}{1-v})^{2^{\frac{m+2}{2}}} \log^2(\frac{1}{1-v}) \right).
\] The proof follows from Theorem~\ref{NSAwayMrayWithTwoRobotsF2FTheorem}

\noindent
\textbf{Case~2: $m$ is odd.} 
The idea is to have both robots move in a v-shape to cover the same distance in each round. Let us consider a monotone increasing sequence of non-negative integers $\{f_{i}:i\geq 0\}$. Each robot tries to estimate the target's speed. We will use the guess $v_{i}=1-2^{-f_{i}}$. In every round, the two robots explore two consecutive rays and try to find the target by guessing its speed. Each robot moves a distance determined by the speed guess $v_i$, such that if the target's speed is less than or equal to $v_{i}$, then the target will be caught in iteration $i$. If the target is not caught during iteration $i$, then both robots move to two new consecutive rays and increase the guess to $v_{i+1}$. The robots repeat this process until one of them catches the target. If the sender captures the target, it moves along with the target, and it informs the receiver, which proceeds to the target. If the receiver captures the target, it moves along with the target. The algorithm can be stated  as follows:

\begin{algorithm}[H]
\caption{NSAwayMrayWithTwoRobotsS/R}
\label{NSAwayMrayWithTwoRobotsS/R}
Input: Target Initial distance d
 and Integer increasing sequence $\{f_{i}:i\geq 0\}$ with $f_{0}=1$, $t=0$ and $i\geq 0$
\begin{algorithmic}[1]
\For{$k \gets 1$ to $\infty$}
    \State $v_{i}=1-2^{-f_{i}}$.
    \State $R$ moves along one ray to distance $x_{i}=(-1)^{i} \frac{d+tv_{i}}{1-v_{i}}$ (unless the target is found), then returns to the origin.
    \State $S$ moves along the consecutive ray to distance $x_{i}=(-1)^{i} \frac{d+tv_{i}}{1-v_{i}}$ (unless the target is found), then returns to the origin.
    \If{Target is found by $R$}
        \State $R$ proceeds with the target.
        \State \textbf{terminate}
  \ElsIf{Target is found by $S$}
        \State $S$ informs $R$ about the target
        \State $S$ proceeds with the target.
        \State \textbf{terminate}  
      \ElsIf{$S$ does not meet $R$}
        \State $S$ proceeds to the target.
        \State \textbf{terminate} 
    \Else
        \State $S$ and $R$ proceed to the next two consecutive rays.
    \EndIf
     \If{$R$ Got informed by $S$}
        \State $R$ reverses direction and proceeds to the target
        \State \textbf{terminate}
    \EndIf
   \State $t=t+2x_{i}$
\EndFor
\end{algorithmic}
\end{algorithm}

We prove the following result.

\begin{theorem}
\label{NSAwayMrayWithTwoRobotsS/RTheorem}
The competitive ratio of Algorithm~\ref{NSAwayMrayWithTwoRobotsS/R} is at most
\[
  O(u)+O\left(\log^2(u)\,u^{2^{\frac{m+1}{2}}+1}\right).
\]
\end{theorem}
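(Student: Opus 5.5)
We follow the template of the proof of Theorem~\ref{NSAwayMrayWithTwoRobotsF2FTheorem} and combine it with the S/R accounting from Theorem~\ref{NDAwayMrayWithTwoRobotsS/RTheorem}. The search depths $x_i$ obey the same recursion as in the F2F algorithm, so we reuse two facts without change. The first is identity~\eqref{eq:no_speed_upper_guess_3}, $\sum_{j=0}^{i-1}x_j=\frac{x_i-2^{f_i}d}{2^{f_i}\cdot 2v_i}$. The second is the growth bound~\eqref{eq:no_speed_upper_guess_7}, $x_{i}\le 2^{\sum_{j=0}^{i}f_j}4^{i}d$. The only new ingredient is the capture time. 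Under S/R, the robot that finally meets the target on its ray is either the receiver, which then simply follows it, or the sender, which broadcasts the location so that $R$ turns back and heads to that ray. With $m$ odd, the worst case is a target missed in round $r-1$ and then encountered $\frac{m-1}{2}$ rounds later, one round earlier than in F2F. This is exactly the round saved by the sender's transmission, as in the discussion preceding Lemma~\ref{lem:even_m_symmetry}.

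First we bound the total time. Let $N=(r-1)+\frac{m-1}{2}$ be the capture round. Then the total time is at most the exploration time $2\sum_{i=0}^{N}x_i$, plus the time for the second robot to reach the origin and chase. The chase, mimicking the bound in Theorem~\ref{NDAwayMrayWithTwoRobotsS/RTheorem}, is at most $u\bigl(d+2v\sum_{i\le N}x_i\bigr)$ plus a detour of order $u(1+v^2)\sum_{i\le N}x_i$. Dividing by $ud$ gives
\[
CR\le 2u+\frac{2u(1+v^2)\sum_{i=0}^{N}x_i}{d}\le O(u)+\frac{4u}{d}\sum_{i=0}^{N}x_i .
\]
Next we use~\eqref{eq:no_speed_upper_guess_3} with index $N+1$ and drop the negative term. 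Since $v_i\ge 1/2$, this gives $\sum_{i\le N}x_i\le x_{N+1}/2^{f_{N+1}}$. By~\eqref{eq:no_speed_upper_guess_7},
\[
\frac{1}{d}\sum_{i\le N}x_i\le 2^{\sum_{j=0}^{N}f_j}\,4^{N+1}.
\]

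With $f_j=2^j$ we have $\sum_{j\le N}f_j<2^{N+1}=2^{r+\frac{m-1}{2}}=(2^{2^{r-1}})^{2^{\frac{m+1}{2}}}$. Since the target was missed in round $r-1$, $2^{2^{r-1}}<u$. Hence $2^{\sum f_j}<u^{2^{\frac{m+1}{2}}}$ and $4^{N+1}=4^{\frac{m+1}{2}}\cdot 4^{r-1}\le 4^{\frac{m+1}{2}}\log^2 u$. Multiplying by the extra factor $u$ from the chase term yields $O(u)+O\bigl(\log^2(u)\,u^{2^{\frac{m+1}{2}}+1}\bigr)$, as claimed.

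The main obstacle is the first step: pinning down the worst-case capture round and the communication-induced chase cost. We must check that, for odd $m$, every target missed in round $r-1$ is hit by some robot within $\frac{m-1}{2}$ further rounds. When the hitting robot is the receiver, it already follows the target and no communication is needed. When it is the sender, the broadcast lets $R$ respond before the next synchronization. We must also check that the cost of $R$ (or of $S$, when it fails to meet $R$ at the origin) reaching the target is at most $O(u)\cdot(d+\sum x_i)$. I would verify this by the same case split as in Algorithm~\ref{NSAwayMrayWithTwoRobotsS/R}. In each branch, the late robot's distance to the target is at most the target's position $d+2v\sum x_i$ plus that robot's current excursion $x_N$, and it closes this gap at relative speed $1-v$. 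This is where the extra factor $u$ in the second term comes from.
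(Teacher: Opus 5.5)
Your algebra from the sum onward matches the paper's (your $N$ is the paper's upper index $(r-1)+\frac{m+1}{2}$ once the miss round is shifted by one). However, the step you flag as the main obstacle is not just unverified; it is false. With two consecutive rays per round and $m$ odd, round $k$ covers rays $2k-1$ and $2k$ modulo $m$. So every ray is visited exactly twice per $m$ rounds, alternately by the two robots. A ray taken as the first of a pair is revisited $\frac{m-1}{2}$ rounds later as the second of a pair, and $\frac{m+1}{2}$ rounds after that as the first again. Hence for half of the adversary's choices the next visit after a miss in round $r-1$ is in round $N'=(r-1)+\frac{m+1}{2}$, and no assignment of the roles $S,R$ avoids this. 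If you charge the full exploration $2\sum_{i\le N'}x_i$ in that case, as your accounting does, you get $2^{2^{N'+1}}\approx u^{2^{(m+3)/2}}$. That is the F2F exponent, so the claimed improvement disappears.

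The missing idea is that the saved round comes from the broadcast cutting the capture round short, not from an earlier encounter. This only works if the roles are matched to the gaps.

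If $S$ makes the capturing visit in the long-gap round $N'$, it broadcasts at once. At that moment $R$ has been out for at most $u(d+t_{N'})-t_{N'}$ time; it returns and chases. The evacuation time is then $O(u^2)(d+t_{N'})$ with $t_{N'}=2\sum_{i\le N'-1}x_i$. This is exactly your sum, now carrying the extra factor $u$, and it is the shape of the paper's bound with the $(1-v)^{-2}$ chase term.

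If instead $R$ captures, $S$ learns it only when it fails to meet $R$ at the end of that round. So $x_N$ is charged in full, but without the extra $u$. This is affordable only if $R$'s capturing visit follows a sender miss after the short gap $\frac{m-1}{2}$. So you must put $S$ on ray $2k-1$ and $R$ on ray $2k$ in round $k$.

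With the opposite assignment, take a sender miss in round $r$ with $u$ just above $2^{2^r}$, followed by the receiver's capture $\frac{m+1}{2}$ rounds later. This forces $S$ to complete an excursion $x_N\ge 2^{2^{N+1}-1}d$ with $N=r+\frac{m+1}{2}$. The resulting ratio is about $u^{2^{(m+3)/2}}$, which exceeds the stated bound.

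Once the two cases are separated this way, your remaining estimates go through unchanged: the bound on $\sum_i x_i$, $2^{\sum_j f_j}<u^{2^{(m+1)/2}}$, and $4^{N+1}\le 4^{\frac{m+1}{2}}\log^2 u$.
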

\begin{proof} (Theorem~\ref{NSAwayMrayWithTwoRobotsS/RTheorem})
  Assume that the receiver misses the target at some round $r$, then the evacuation time would be as follows:
    \begin{align*}
         ET\leq   \sum_{i=0}^{(r-1)+\frac{m+1}{2}} 2x_{i} +\frac{2d+4v \sum_{i=0}^{(r-1)+\frac{m+1}{2}} x_{i}}{(1-v)^2}
    \end{align*}
  Using equation~(\ref{eq:no_speed_upper_guess_3})
     \begin{align}
    \sum_{j=0}^{i-1}  x_j = \frac{x_i - 2^{f_i} d}{2^{f_i} \cdot 2 \cdot v_i}
  \end{align}
  and equation~(\ref{eq:no_speed_upper_guess_7})
  \begin{align}
   x_{i+1}\leq 2^{\sum_{j=0}^{i+1} f_j} \cdot 4^{i+1} \cdot d .
  \end{align}
  Consider $f_{i}=2^{i}$, the competitive ratio becomes as follows:
  \begin{align*}
      CR&\leq 2u+\frac{2(1+v^2)u}{d}\cdot \frac{x_{r+\frac{m+1}{2}}-2^{r+\frac{m+1}{2}}\cdot d}{2\cdot2^{r+\frac{m+1}{2}}\cdot v_{r+\frac{m+1}{2}}}\\
      &\leq 2u+\frac{(1+v^2)u}{d}\cdot \frac{x_{r+\frac{m+1}{2}}}{2^{r+\frac{m+1}{2}}\cdot v_{r+\frac{m+1}{2}}}\\
      &\leq 2u+(1+v^2)u\cdot\frac{2^{\sum_{j=0}^{r+\frac{m+1}{2}} f_j}\cdot 4^{r+\frac{m+1}{2}}}{2^{r+\frac{m+1}{2}}v_{r+\frac{m+1}{2}}}\\
      &\leq 2u+(1+v^2)u\cdot 2^{1+\sum_{j=0}^{r+\frac{m-1}{2}} f_j}\cdot 4^{r+\frac{m+1}{2}}\\
       &\leq 2u+(1+v^2)u\cdot 2^{2^{r+\frac{m+1}{2}}}\cdot 4^{r+\frac{m+1}{2}}\\
  \end{align*}
  Considering that the target is missed by the receiver at iteration $r$, the speed guess must be smaller than the true speed $v$. Hence,
$
v_{r} = 1 - 2^{-f_{r}} \leq v,
$
which implies
$
2^{f_{r}}=2^{2^r} \leq \frac{1}{1 - v}=u.
$
Then the competitive ratio becomes as follows:
  \begin{align*}
      CR&\leq 2u+(1+v^2)u^{2^{\frac{m+1}{2}}+1}\cdot4^{r+\frac{m+1}{2}}\\
      &\leq 2u+(1+v^2)\cdot 4^{\frac{m+1}{2}}\log^2 (u)\,u^{2^{\frac{m+1}{2}}+1}\\
  \end{align*}
This proves Theorem~\ref{NSAwayMrayWithTwoRobotsS/RTheorem}.  
\qed  
  \end{proof}
\subsection{NoKnowledge Away Model}
\textbf{Case~1: $m$ is even.} 
The competitive ratio is at most $$ 1 + O\left( M^{2^{\frac{m+4}{2}}} \log^2(M) \log \log(M)\right), M=\max(d,\frac{1}{1-v}).$$ The proof follows from Theorem~\ref{NKAwayMrayWithTwoRobotsF2FTheorem}

\noindent
\textbf{Case~2: $m$ is odd.}
The idea is to have both robots move in a v-shape to cover the same distance in each round. Let us consider two monotone increasing sequences of non-negative integers $\{f_{i}:i\geq 0\}$ and $\{g_{i}:i\geq 0\}$. Each robot tries to estimate the target's speed and distance. We will use the guess $v_{i}=1-2^{-f_{i}}$ to guess the speed of the target and $d_{i}=2^{g_{i}}$ to guess the initial distance of the target from the origin. In every round, the two robots explore two consecutive rays and try to find the target by guessing its speed and distance.  Each robot moves a distance determined by the speed guess $v_i$ and distance guess $d_{i}$ such that if the target's speed is less than or equal to $v_{i}$ and the target's distance is less than or equal to $d_{i}$, then the target will be caught in iteration $i$. If the target is not caught during iteration $i$, then both robots move to two new consecutive rays and increase the speed's guess to $v_{i+1}$ and the distance's guess to $d_{i+1}$. The robots keep doing that until one of them catches the target. If the sender captures the target, it moves along with the target, and it informs the receiver, which proceeds to the target. If the receiver captures the target, it moves along with the target. The algorithm can be stated as follows:

\begin{algorithm}[H]
\caption{NKAwayMrayWithTwoRobotsS/R}
\label{NKAwayMrayWithTwoRobotsS/R}
Input: Integer increasing sequences $\{f_{i}:i\geq 0\}$ and $\{g_{i}:i\geq 0\}$ with $f_{0}=1$, $g_{0}=0$, $t=0$ and $i\geq 0$
\begin{algorithmic}[1]
\For{$k \gets 1$ to $\infty$}
    \State $v_{i}=1-2^{-f_{i}}$ and $d_{i}=2^{g_{i}}$.
    \State $R$ moves along one ray to distance $x_{i}=(-1)^{i} \frac{d_{i}+tv_{i}}{1-v_{i}}$ (unless the target is found), then returns to the origin.
    \State $S$ moves along the consecutive ray to distance $x_{i}=(-1)^{i} \frac{d_{i}+tv_{i}}{1-v_{i}}$ (unless the target is found), then returns to the origin.
    \If{Target is found by $R$}
        \State $R$ proceeds with the target.
        \State \textbf{terminate}
    \ElsIf{Target is found by $S$}
        \State $S$ informs $R$ about the target
        \State $S$ proceeds with the target.
        \State \textbf{terminate}
     \ElsIf{$S$ does not meet $R$}
        \State $S$ proceeds to the target.
        \State \textbf{terminate}     
    \Else
        \State $S$ and $R$ proceed to the next two consecutive rays.
    \EndIf
    \If{$R$ Got informed by $S$}
        \State $R$ reverses direction and proceeds to the target
        \State \textbf{terminate}
    \EndIf
    \State $t=t+2x_{i}$
\EndFor
\end{algorithmic}
\end{algorithm}

We prove the following result.

\begin{theorem}
\label{NKAwayMrayWithTwoRobotsS/RTheorem}
The competitive ratio of Algorithm~\ref{NKAwayMrayWithTwoRobotsS/R} is at most
\[
 O\left(u\right)+O\left(\frac{\log\log M}{d(1-v)}\cdot \log^2(M)M^{2^{\frac{m+3}{2}}}\right).
\]
\end{theorem}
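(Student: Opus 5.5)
The plan follows the proof of Theorem~\ref{NSAwayMrayWithTwoRobotsS/RTheorem} step by step, replacing the single speed guess by the joint guess $(v_i,d_i)$. The needed bounds on $x_i$ come from the proof of Theorem~\ref{NKAwayMrayWithTwoRobotsF2FTheorem}.

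\textbf{Step 1: worst case and evacuation time.} Suppose the receiver misses the target in some round $r$. Because $m$ is odd and the roles alternate over consecutive rays, the target's ray is revisited within $\frac{m+1}{2}$ further rounds. If the sender reaches it, it notifies $R$ wirelessly. Otherwise $S$ notices that $R$ is not at the origin and follows $R$ directly. In either case, the robots spend at most $2\sum_{i=0}^{(r-1)+\frac{m+1}{2}} x_i$ time exploring. After that, a second robot has to chase a target that is escaping at speed $v$. Exactly as in the S/R NoSpeed proof, this gives
\[
ET\leq 2\sum_{i=0}^{(r-1)+\frac{m+1}{2}} x_i+\frac{2d+4v\sum_{i=0}^{(r-1)+\frac{m+1}{2}}x_i}{(1-v)^2}.
\]
Dividing by $ud$ yields
\[
CR\leq 2u+\frac{2(1+v^2)u}{d}\sum_{i=0}^{(r-1)+\frac{m+1}{2}}x_i .
\]
The term $2u$ accounts for the $O(u)$ part of the statement.

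\textbf{Step 2: bound the sum of radii.} Apply \eqref{eq:no_knowledge_upper_guess_1} with index $i=r+\frac{m+1}{2}$, which gives
\[
\sum_{j<i}x_j\le \frac{x_i}{2^{f_i+1}v_i}.
\]
Then use \eqref{eq:no_knowledge_upper_guess_7}, namely
\[
x_i\le \sum_{k=0}^{i}2^{g_k+\sum_{j=k}^{i}f_j}4^{i-k}.
\]
Dividing by $2^{f_i}$ removes the last speed factor. Using $v_i\ge 1/2$ and bounding every summand by its largest one, exactly as in the F2F odd case, the expression becomes
\[
\bigl(r+\tfrac{m+3}{2}\bigr)\,2^{g_{r+\frac{m+1}{2}}}\,2^{\sum_{j=0}^{r+\frac{m-1}{2}}f_j}\,4^{r+\frac{m+1}{2}}.
\]

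\textbf{Step 3: substitute the schedule.} Set $f_j=g_j=2^j$. Then
\[
\sum_{j\le r+\frac{m-1}{2}}f_j<2^{r+\frac{m+1}{2}},
\]
so the two exponential factors together are at most $2^{2\cdot 2^{r+\frac{m+1}{2}}}=(2^{2^{r}})^{2^{\frac{m+3}{2}}}$.

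A miss in round $r$ means that either $v_r<v$ or $d_r<d$. This gives $2^{2^{r}}\le \max(u,d)=M$, so the power is bounded by $M^{2^{\frac{m+3}{2}}}$. The same inequality gives $r\le\log\log M$. Hence $r+\frac{m+3}{2}=O(\log\log M)$ and $4^{r}=O(\log^2 M)$, with constants depending on $m$. The remaining factor is $u/d=\frac{1}{d(1-v)}$. Collecting everything,
\[
CR\le 2u+O\!\left(\frac{\log\log M}{d(1-v)}\log^2(M)\,M^{2^{\frac{m+3}{2}}}\right).
\]

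\textbf{Main obstacle.} The hardest step is the index bookkeeping in Step 3. A miss in round $r$ gives $2^{2^r}\le M$ only for the guess that actually failed. The worst case needs a single inequality that controls both $\sum f_j$ and $g_{r+\frac{m+1}{2}}$, each contributing $2^{2^{r+\frac{m+1}{2}}}$. That is why the exponent comes out as $2^{\frac{m+3}{2}}$, which is one doubling more than in the NoSpeed S/R bound. I would first check that the factor $u$ coming from the chase is genuinely kept separate and not absorbed into $M$. This is what produces the prefactor $1/(d(1-v))$ rather than an extra power of $M$.
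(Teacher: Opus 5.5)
Your proposal is correct and follows the paper's proof essentially line for line: you use the same evacuation-time bound taken over from the NoSpeed S/R case, the identity \eqref{eq:no_knowledge_upper_guess_1} at index $r+\frac{m+1}{2}$ combined with the unrolled bound \eqref{eq:no_knowledge_upper_guess_7}, and the schedule $f_j=g_j=2^j$. Two of your details are actually cleaner than the paper's: you note that a miss guarantees only one of $v_r<v$ or $d_r<d$ (which still gives $2^{2^r}<M$), whereas the paper asserts both $2^{f_r}\le u$ and $2^{g_r}<d$, and your prefactor $u/d$ matches the theorem statement, whereas the paper's final line has $du$ in the denominator.
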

\begin{proof} (Theorem~\ref{NKAwayMrayWithTwoRobotsS/RTheorem})
 Assume that the receiver misses the target at some round $r$, then the evacuation time would be as follows:
    \begin{align*}
         ET\leq   \sum_{i=0}^{(r-1)+\frac{m+1}{2}} 2x_{i} +\frac{2d+4v \sum_{i=0}^{(r-1)+\frac{m+1}{2}} x_{i}}{(1-v)^2}
    \end{align*}
   Using equation~(\ref{eq:no_knowledge_upper_guess_1}):
     \begin{align}
    \sum_{j=0}^{i-1}  x_j = \frac{x_i - 2^{f_i} d_{i}}{2^{f_i} \cdot 2 \cdot v_i}
  \end{align}
  and equation~(\ref{eq:no_knowledge_upper_guess_7}):
  \begin{align}
   x_{i+1}\leq 2^{\sum_{k=0}^{i+1} 2^{g_{k}}+\sum_{j=k}^{i+1}f_{j}} \cdot 4^{i-k+1} \cdot d
  \end{align}
   Consider $f_{i}=g_{i}=2^{i}$, the competitive ratio becomes as follows: 
  \begin{align*}
      CR&\leq 2u+\frac{2\sum_{j=0}^{r+\frac{m-1}{2}}x_{j}+2v^2\sum_{j=0}^{r+\frac{m-1}{2}}x_{j}}{du}\\
      &\leq 2u+\frac{2(1+v^2)x_{r+\frac{m+1}{2}}}{du\cdot2\cdot2^{f_{r+\frac{m+1}{2}}}v_{r+\frac{m+1}{2}}}\\
      &\leq 2u+\frac{(1+v^2)(r+\frac{m+3}{2})2^{g_{r+\frac{m+1}{2}}}2^{\sum_{j=0}^{r+\frac{m+1}{2}}f_{j}}4^{r+\frac{m+1}{2}}}{du\cdot2^{f_{r+\frac{m+1}{2}}}v_{r+\frac{m+1}{2}}}\\
      &\leq 2u+\frac{2(1+v^2)(r+\frac{m+3}{2})2^{g_{r+\frac{m+1}{2}}}2^{2^{r+\frac{m+1}{2}}-1}4^{r+\frac{m+1}{2}}}{du}\\
  \end{align*}
  Considering that the target is missed by the receiver at iteration $r$, the speed guess must be smaller than the true speed $v$. Hence,
$
v_{r} = 1 - 2^{-f_{r}} \leq v,
$
which implies
$
2^{f_{r}} \leq u.
$
Similarly, for the distance we obtain
$
2^{g_{r}} < d.
$
We conclude with the following:
  \begin{align*}
      CR&\leq 2u+\frac{4^{\frac{m+1}{2}}(1+v^2)(r+\frac{m+3}{2})u^{2^{\frac{m+1}{2}}}(d)^{2^{\frac{m+1}{2}}}\log^2(M)}{du}\\
      &\leq 2u+\frac{4^{\frac{m+1}{2}}(1+v^2)(\log\log M+\frac{m+3}{2})M^{2^{\frac{m+3}{2}}}\log^2(M)}{du}\\
  \end{align*}
This proves Theorem~\ref{NKAwayMrayWithTwoRobotsS/RTheorem}.  
\qed  
\end{proof}
\section{Lower Bounds and Performance Gaps}
While Theorem~\ref{LowerBoundForNoDistanceAwayOnMrayWithTwoRobotsS/RTheorem} establishes a lower bound of $(m-1) + 2 + 2\sqrt{2}$ for static targets ($v=0$), establishing tight lower bounds for $v > 0$ remains challenging. A trivial lower bound for moving targets is $\text{CR} \geq 1 + m(1+v)$, as any deterministic strategy must explore all $m$ rays in the worst case. We can show this as follows:
\begin{lemma}[Baseline Lower Bound for Moving Target Capture]
\label{lem:baseline_lower_bound}
For any deterministic search algorithm $\mathcal{A}$ using two unit-speed robots on an $m$-ray star graph to capture an oblivious target starting at distance $d > 0$ and moving away from the origin at constant speed $v < 1$, the competitive ratio is lower bounded by:
\begin{equation}
    \text{CR}(\mathcal{A}) \ge m.
\end{equation}
When target drift across sequential exploration phases of unknown distance $d$ is taken into account, the baseline lower bound scales as $\text{CR}(\mathcal{A}) \ge 1 + \Omega(m(1+v))$.
\end{lemma}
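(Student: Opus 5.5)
The plan is an adversary argument on the order in which the rays are first explored. We normalise the target's initial distance to $d=1$ (the bound is scale invariant), first treat $v=0$, and then reintroduce $v$. Fix the deterministic algorithm $\mathcal{A}$ and a small $\varepsilon>0$. For each ray $\rho$, let $t_\rho$ be the first time at which some robot reaches distance $1$ on $\rho$ when no target is present. The adversary orders the rays so that $t_{\rho_1}\le\dots\le t_{\rho_m}$ and puts the target at distance $1$ on $\rho_m$. Since $\mathcal{A}$ is deterministic and the target is oblivious, the run coincides with the target-free run up to time $t_{\rho_m}$, so capture cannot occur before $t_{\rho_m}$.

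The key step is to lower bound $t_{\rho_m}$ by counting the total distance travelled by both robots. Let $T=t_{\rho_m}$. By time $T$ the two unit-speed robots have together travelled at most $2T$. The distances split as follows.
\begin{itemize}
\item Each of the $m-1$ rays $\rho_1,\dots,\rho_{m-1}$ has been penetrated to depth $1$.
\item At most one of these rays can still hold a robot at time $T$, namely the robot that does not reach $\rho_m$. Every other ray forces a full round trip of length $2$.
\item The ray on which the non-arriving robot sits costs at least $1$.
\item The final ray $\rho_m$ costs $1$.
\end{itemize}
This gives $2T\ge 2(m-2)+1+1$, hence $T\ge m-1$.

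To reach $m$, I would use the fact that in this paper capture (evacuation) is completed only when both robots are at the target, as in Algorithms~\ref{NDAwayMrayWithTwoRobotsF2F} and~\ref{NDAwayMrayWithTwoRobotsS/R}. At time $T$ the second robot is either at distance $\ge 1$ on another ray or back at the origin. It still needs at least $1$ more time unit to reach the target (in fact $2$ if it sits at depth $1$ elsewhere). This yields a total time of at least $m$ against $T_{opt}=1$, so $\mathrm{CR}(\mathcal{A})\ge m$. If one only requires a single robot to reach the target, I expect the counting to give $m-1$. Closing that gap by a careful case split on whether the second robot is mid-excursion is the step I expect to be the main obstacle. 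I would handle it by letting the adversary alternatively choose $\rho_{m-1}$ at depth $1-\varepsilon$ whenever the second robot is still on $\rho_{m-1}$, and taking $\varepsilon\to 0$.

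For $v>0$, I would redo the same count with the target receding. By time $T$ the target is at distance $1+vT$, so the arriving robot needs extra time $u\,vT$ beyond $T$. The other $m-1$ excursions each add drift $v$ per unit of elapsed time. Dividing by $T_{opt}=u$, the per-ray cost becomes $2(1+v)$ rather than $2$, up to the factor $u$. This produces $\mathrm{CR}(\mathcal{A})\ge 1+\Omega(m(1+v))$, matching the second claim.
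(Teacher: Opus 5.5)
\paragraph{A gap in the step from $m-1$ to $m$.}
Your total-distance count is sound as far as $T\ge m-1$ for the first arrival. The step from $m-1$ to $m$ does not hold as written. You claim that at time $T$ the second robot is either at depth $\ge 1$ on another ray or back at the origin. That dichotomy is false: the second robot can be anywhere, and in particular it can already be on $\rho_m$, trailing the first robot. Its remaining distance to the target is then arbitrarily small, so ``at least one more time unit'' fails. When the second robot is off $\rho_m$ your step is fine, since it must pass through the origin. In the trailing case, however, the slack has to come from $T$ itself, because all of $\rho_1,\dots,\rho_{m-1}$ then need full round trips.

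The clean repair is to run your count at the time $T'$ when both robots are at the target. Each $\rho_i$ with $i<m$ was penetrated to depth $1$ before $t_{\rho_m}$; this still holds for $v>0$, since a robot must pass depth $1$ before it can meet the target at depth $1+vt$. So each such ray carries a round trip of length at least $2$. Each robot has also travelled at least $1+vT'$ along $\rho_m$. Hence
\[
2T' \ge 2(m-1)+2(1+vT'),
\]
so $T'\ge mu=m\,T_{opt}$, for every $v$ at once.

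\paragraph{The single-robot reading and the drift argument.}
Your $\varepsilon$-perturbation plan for the single-robot reading cannot succeed, because under that reading the claim is false. For $m=2$ the robots can head in opposite directions from time $0$, and the first arrival occurs at time $ud$, giving ratio $1$. So the evacuation reading you adopt is essential, not a convenience. It does match the paper's algorithms, whose reported times include the second robot joining.

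Your drift paragraph is not justified. Excursions made in parallel near time $0$ incur almost no drift, so a per-ray cost of $2(1+v)$ is not forced. It is also unnecessary: since $1+v<2$, the second claim follows from $\mathrm{CR}\ge m$ directly, because $m-1\ge m/2>m(1+v)/4$ for $m\ge 2$.

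\paragraph{Comparison with the paper.}
The paper argues sequentially. Clearing the $m-1$ wrong rays takes $t_{prev}\ge\lceil (m-1)/2\rceil\cdot 2d$. Then a robot turns onto the target's ray and needs $(d+vt_{prev})/(1-v)$ more time, which gives $(d+t_{prev})u\ge m\,ud$ uniformly in $v$. That route tacitly assumes the capturing robot enters the target's ray only after the other rays are cleared. Two robots working in parallel need not behave this way, as the $m=2$ example shows. Your count avoids that assumption, and once it is repaired at $T'$ as above, it is the more reliable argument.
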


\begin{proof}
Let $\mathcal{A}$ be any deterministic two-robot search algorithm on an $m$-ray star graph. An optimal offline strategy with prior knowledge of the target's initial ray moves directly along that ray and captures the target at time:
\begin{equation}
    T_{\text{opt}}(d, v) = \frac{d}{1-v} = u \cdot d, \quad \text{where } u = \frac{1}{1-v}.
\end{equation}

In an online setting without prior knowledge of the target's ray, an adversary can place the target on the $m$-th (last) ray explored by the robots to depth $d$. 

To verify that the target is not on the first $m-1$ rays up to distance $d$, the two robots must travel a round-trip distance of $2d$ on each of those rays. Operating in parallel, two robots can clear at most $2$ rays of depth $d$ every $2d$ time units. Thus, clearing $m-1$ unvisited rays requires an elapsed exploration time of at least:
\begin{equation}
    t_{\text{prev}} \ge \left\lceil \frac{m-1}{2} \right\rceil \cdot 2d \ge (m-1)d.
\end{equation}

During this time $t_{\text{prev}}$, the target moves away from the origin at speed $v$. When a robot finally turns onto the target's ray at time $t_{\text{prev}}$, the target's position has expanded to $d' = d + v \cdot t_{\text{prev}}$. The total online capture time $T_{\text{online}}$ required to catch up with the target satisfies:
\begin{equation}
    T_{\text{online}} = t_{\text{prev}} + \frac{d + v \cdot t_{\text{prev}}}{1-v} = \frac{d + t_{\text{prev}}}{1-v}.
\end{equation}

Substituting the lower bound $t_{\text{prev}} \ge (m-1)d$:
\begin{equation}
    T_{\text{online}} \ge \frac{d + (m-1)d}{1-v} = \frac{m \cdot d}{1-v} = m \cdot T_{\text{opt}}(d, v).
\end{equation}

Taking the supremum over all initial distances $d > 0$ yields the competitive ratio bound:
\begin{equation}
    \text{CR}(\mathcal{A}) = \sup_{d > 0, \, 0 \le v < 1} \frac{T_{\text{online}}}{T_{\text{opt}}} \ge m.
\end{equation}

When $d$ is unknown, search radii must expand geometrically in rounds $a_k = a^k a_0$. To guarantee catching up in round $k$ despite target drift during the $m-1$ wrong traversals, the geometric growth factor must satisfy $a \ge 1+v$, multiplying the $m$-ray traversal overhead by an additive factor proportional to $(1+v)$, establishing $\text{CR}(\mathcal{A}) \ge 1 + \Omega(m(1+v))$.
\end{proof}

Obtaining a tighter lower bound that more closely matches the upper bound remains an important direction for future research.
\section{Conclusion and future work}
We studied the problem of capturing an oblivious moving target on an $m$-ray star graph using two robots under both F2F and S/R communication models. The analysis considers different levels of knowledge about the target. We designed new algorithms and proved upper and lower bounds.
Future work includes improving the lower bounds and extending the work to multiple searchers capturing an oblivious moving target in star graphs. 



\bibliographystyle{abbrv}
\bibliography{refs}
\newpage
\end{document}